\newtheorem{theorem}{Theorem}[section]
\newtheorem{corollary}[theorem]{Corollary}
\newtheorem{lemma}[theorem]{Lemma}
\newtheorem{remark}{Remark}
\newcommand{\bea}{\begin{eqnarray*}}
\newcommand{\eea}{\end{eqnarray*}}
\newcommand{\bean}{\begin{eqnarray}}
\newcommand{\eean}{\end{eqnarray}}
\newcommand{\lra}{\longrightarrow}
\newcommand{\bfX}{{\bf X}}
\newcommand{\bfY}{{\bf Y}}
\newcommand{\C}{{\rm Cov}}
\newcommand{\sg}{\Sigma}
\newcommand{\calS}{\mathcal{S}}
\newcommand{\bbP}{\mathbb{P}} 
\newcommand{\bbR}{\mathbb{R}}
\newcommand{\bbE}{\mathbb{E}}
\title{Consistent and scalable Bayesian joint variable and graph selection for disease diagnosis leveraging functional brain network 
}
\author{Xuan Cao\\Department of Mathematical Sciences, University of Cincinnati
	\and 
	Kyoungjae Lee\footnote{Corresponding author.} \\ Department of Statistics, Sungkyunkwan University}
\begin{document}
\allowdisplaybreaks
	\doublespacing
	\noindent
\maketitle

\begin{abstract}
	We consider the joint inference of regression coefficients and the inverse covariance matrix for covariates in high-dimensional probit regression, where the predictors are both relevant to the binary response and functionally related to one another.
	A hierarchical model with spike and slab priors over regression coefficients and the elements in the inverse covariance matrix is employed to simultaneously perform variable and graph selection.
	We establish joint selection consistency for both the variable and the underlying graph when the dimension of predictors is allowed to grow much larger than the sample size, which is the first theoretical result in the Bayesian literature. 
	A scalable Gibbs sampler is derived that performs better in high-dimensional simulation studies compared with other state-of-art methods. 
	We illustrate the practical impact and utilities of the proposed method via a functional MRI dataset, where both the regions of interest with altered functional activities and the underlying functional brain network are inferred and integrated together for stratifying disease risk.
\end{abstract}
	
\section{Introduction}
Analyzing high-dimensional data is becoming increasingly prevalent and challenging as technology advances facilitating the collection and storage of more extensive massive data.
When applying a generalized linear model (GLM) to such large-scale data, a large number of variables can easily cause an overfitting problem.
In this situation, variable selection is one of the most commonly used techniques to avoid overfitting.
Numerous frequentist methods on variable selection have been introduced ever since the appearance of Lasso \citep{tibshirani1996regression}, and many analogous Bayesian methods have also been proposed \citep{ishwaran2005,narisetty2014bayesian,rovckova2018spike}. 

On the other hand, understanding the complex relationships between variables high-dimensional datasets is also important, where inverse covariance matrices (or equivalently, precision matrices) are prevailingly exploited to capture the multivariate dependence.
This is often called a network structure between the variables.
A variety of work on algorithms and their theoretical considerations have emerged to investigate a network structure \citep{wainwright_2019}. 
One of key developments was the introduction of the neighborhood selection method \citep{meinshausen2006high}, which leverages the connection between the $(i,j)$th entry of the inverse covariance matrix $\Omega$ to the partial correlation between the $i$th and $j$th variable estimated through a penalized regression setup. 
Many other frequentist methods have been developed for sparse precision matrix estimation based on the neighborhood selection \citep{yuan2007ggm,Friedman2007glasso,Peng2009SPACE,Khare2015CONCORD}, and several Bayesian counterparts have been proposed in the literature \citep{dobra2011,wang2012,wang2015scaling}. However, a key challenge for these Bayesian approaches is their scalability to high-dimensional settings. 
To address this issue, recently, \citet{jalali2020bconcord} employed the regression-based generalized likelihood function in \citet{Khare2015CONCORD} combined the spike and slab priors over entries in $\Omega$.
They proposed a scalable Gibbs sampler that works well in high dimensions and runs comparably fast compared with the Graphical Lasso \citep{Friedman2007glasso}.

It is often of interest to jointly perform variable selection and discover the network structure among predictors.
This type of problems is of wide clinical applications in radiological and genomic studies.  
Magnetic resonance imaging (MRI) scans and genetic traits are typical examples where the mechanism for effect on an outcome, such as functional brain activities \citep{Functional:2012} or molecular phenotypes such as gene expression, proteomics, or metabolomics \citep{GeneExpressionNetwork:2007,MetabolomicsNetwork:2020}, often displays a coordinated change along a pathway.
In such cases, the impact of a single factor may not be apparent.
Specifically for radiological studies, recent progress in imaging analysis allows the development of a novel feature extraction method called radiomics which converts large amounts of medical imaging characteristics into high-dimensional mineable data pool to build a predictive and descriptive model. The method has been applied to the diagnosis of neuropsychiatric diseases such as autism, schizophrenia, and Alzheimer disease \citep{Feng2019AD,Salvatore2021}. These findings demonstrate the validity of these radiomic approaches in discovering discriminative features that can reveal pathological information. In such cases, the method of joint selection can incorporate and highlight the underlying brain network to improve the classification accuracy. 

Several frequentist and Bayesian methods have been proposed for joint inference on variables and graphs.
\citet{Li:Li:2008} and \citet{Li:Li:2010} investigated a graph-constrained regularization procedure as well as its theoretical properties in order to account for the neighborhood information of variables measured on a given graph.
\citet{Dobra:2009} estimated a network among relevant predictors by first performing a stochastic search to discover subsets of predictors, then using a Bayesian model averaging approach to estimate a dependency network. 
\citet{Liu:Chakraborty:2014}  developed a Bayesian method for regularized regression, which provides inference on the inter-relationship between variables by explicitly modeling through a graph Laplacian matrix. 
\citet{Joint:Network:2016} simultaneously inferred a sparse network among the predictors based on the block Gibbs sampler and performed variable selection using this network as guidance by incorporating it into a Markov random field (MRF) prior.

Despite recent advances in Bayesian methods for joint regression and covariance estimation, theory related to joint selection consistency is not well-understood. Some early attempts \citep{CL:Sinica} focused solely on linear regression models, where the predictors are linked through a directed graph with a known ordering.
To the best of our knowledge, joint variable and graph selection consistency in a high-dimensional GLM has not been investigated under either directed or undirected graphical models. 

In this paper, we consider a high-dimensional probit model with network-structured predictors via a Gaussian graphical model.
Our goal is to jointly perform variable and graph selection with theoretical guarantees, and to develop a scalable algorithm for joint inference in a high-dimensional regime. 
We fill the gap in the literature by establishing joint selection consistency of the proposed posterior distribution, which guarantees that the posterior probability assigned to the significant variables and the true graph tends to $1$ as we observe more data.
To perform joint selection, spike and slab priors, imposed on the regression coefficients and the precision matrix of predictors, are linked by an MRF prior.
Furthermore, for scalable inference, we adopt the regression-based generalized likelihood function \citep{Khare2015CONCORD} for the predictors. 
This enables the derivation of a scalable Gibbs sampler by making available the conditional posteriors for the entries of the precision matrix in closed form. 
We illustrate the practical impact and utilities of the proposed method via a functional MRI dataset, where both the regions of interest with altered functional activities and the underlying functional brain network are inferred and integrated together for disease diagnosis.

The rest of the paper is organized as follows. In Section \ref{sec:modelspecification}, we describe the generalized likelihood function for inverse covariance estimation and the spike and slab priors for sparsity recovery under a probit regression. 
Posterior computation algorithms are described in Section \ref{sec:computation}.
Theoretical results of the proposed posterior including joint variable and graph selection consistency are shown in Section \ref{sec:theory} with proofs provided in Section \ref{sec:proofs}.   
We show the performance of the proposed method and compare it with other competitors through simulation studies in Section \ref{sec:simulation}. 
In Section \ref{sec:PD}, a radiomic analysis is conducted for predicting Parkinson’s disease based on functional MRI (fMRI) data, and a discussion is given in Section \ref{sec:discussion}.

\section{Model Specification} \label{sec:modelspecification}
Consider a case-control study to identify the radiomic features that are network-structured and may contribute to the disease risk by comparing patients who have certain disease (the ``cases”) with subjects who do not have that disease but are otherwise similar (the ``controls”). In particular, for $i = 1, 2, \ldots, n$, let $Y_i \in \{0,1\}$ be the binary response variable indicating whether the $i$th subject has certain disease, and denote $X_i = (x_{i1}, x_{i2}, \ldots, x_{ip})^T \in \mathbb{R}^p$ as the covariate vector containing all the $p$ radiomic features for the $i$th subject. We consider the following probit model with covariates that obey a multivariate Gaussian distribution: for $1 \le i \le n,$
\begin{eqnarray} \label{model:spec}
& P\left(Y_i = 1 \mid X_i,\beta\right) = \Phi\big(X_i^T\beta\big), \label{model:spec1}
\\
&    X_i \mid \Omega  \overset{i.i.d.}{\sim} N_p \big(0, \Omega^{-1}\big), \label{model:spec2}
\end{eqnarray}  
where $\Phi(\cdot)$ is the cumulative distribution function of the standard normal distribution, $\beta$ is a $p \times 1$ vector of regression coefficients, and $\Omega = (\omega_{jk})$ denotes the $p \times p$ inverse covariance matrix. 
Our goal is to infer the regression coefficients $\beta$ and underlying network structure $\Omega$ simultaneously to identify all the significant features better.

\subsection{CONCORD generalized likelihood for predictors}

In the frequentist setting, one of the most popular methods to achieve a sparse estimate of $\Omega$ is the graphical lasso \citep{Friedman2007glasso,yuan2007ggm}, where  the objective function is composed of the negative Gaussian log-likelihood and an $\ell_1$-penalty term for the off-diagonal entries of the inverse covariance matrix over the space of positive definite matrices. 
This objective function is also proportional to  the posterior density of $\Omega$ under Laplace priors for the off-diagonal entries, leading to a Bayesian inference and analysis framework \citep{wang2012}. Note that the requirement on the positive definiteness of $\Omega$ translates to the expensive computational need of inverting $(p-1)\times(p-1)$ matrices in each iteration of both graphical lasso or Bayesian Markov Chain Monte Carlo (MCMC) algorithms. 

To mitigate this issue, \citet{Khare2015CONCORD} relaxed the parameter space of $\Omega$ from positive definite matrices to symmetric matrices with positive diagonal entries.
Note that it cannot be achieved under the graphical lasso framework due to the determinant of $\Omega$ in the likelihood function.
Let $S =  n^{-1} \sum_{i=1}^n X_i X_i^T$ denote the sample covariance matrix. 
They introduced the CONvex CORrelation selection methoD (CONCORD) generalized likelihood function, for a given $p\times p$ symmetric matrix $\Omega$,
\bean \label{likelihood_concord}
\mathcal L(\Omega) \,=\,
 \exp \bigg\{  n\sum_{j=1}^p \log \omega_{jj}  - \frac n 2 \mbox{tr}(\Omega^2S) \bigg\}
 \,=\,
\exp\bigg\{n\sum_{j=1}^p \log \omega_{jj} -  \frac 1 2 \sum_{j=1}^p\sum_{i=1}^n \Big(       \omega_{jj}x_{ij} + \sum_{k\neq j}\omega_{jk}x_{ik}\Big)^2 \bigg\},
\eean
which is motivated by the regression-based neighborhood selection method \citep{meinshausen2006high}.
The quadratic nature of the objective function \eqref{likelihood_concord} and the relaxation of the parameter space lead to an entire order of magnitude decrease in computational complexity compared to that required by graphical lasso-based approaches. 
Hereafter, we proceed with the CONCORD generalized likelihood \eqref{likelihood_concord} instead of the Gaussian likelihood corresponding to \eqref{model:spec2},
and show that asymptotic properties as well as the computational efficiency can be achieved under the Bayesian framework of joint inference.

\subsection{Spike and slab priors for graph selection}

The main goal of this paper is to simultaneously infer the sparsity pattern in both $\beta$ and $\Omega$. 
To facilitate this purpose, we first introduce the following spike and slab priors for every off-diagonal entry of $\Omega$,
\begin{equation} \label{off-diag}
\omega_{jk} \,\overset{ind}{\sim}\, (1-q)\delta_{0}(\omega_{jk}) + q N(0, 1/\lambda_{jk})   \quad \text{ for } 1 \le j < k \le p ,
\end{equation}
where $\delta_{0}(\cdot)$ denotes the point mass at 0, $\lambda_{jk}>0$ is the precision of slab part, and $q \in (0,1)$ is the prior inclusion probability. 
For the diagonal entries of $\Omega$, we assume
\begin{equation}\label{diag}
\omega_{jj} \,\overset{ind}{\sim}\,  {\rm Exp}(\lambda_j)  \quad \text{ for } 1\le j \le p  ,
\end{equation}
where $\lambda_j > 0$.
Let $\xi = (\omega_{jk}, 1 \le j< k < p)^T \in \bbR^{\binom p 2}$  and $\delta = ( \omega_{11}, \omega_{22}, \ldots, \omega_{pp} )^T \in \bbR^p$
be the collection of all the off-diagonal and diagonal entries of $\Omega$, respectively.
Let a symmetric matrix $G=\left(G_{jk}\right) \in \{0,1 \}^{p \times p}$ with zero diagonals represent the adjacency matrix corresponding to the precision matrix $\Omega$ where $G_{jk} = G_{kj} = 1$ if and only if $\omega_{jk} \neq 0$, and $G_{jk} = G_{kj} = 0$ otherwise. 
If we further restrict our analysis to only realistic models, i.e., precision matrices with nonzero entries no more than $R_1 > 0$, spike and slab priors \eqref{off-diag} can be alternatively represented as
\begin{eqnarray*}
&\xi \mid G \sim N_{|G|}(0, \Lambda_u), \\
&\pi( G )  \propto  q^{|G|}(1-q)^{\binom p 2 - |G|} I ( |G|  < R_1 )  ,
\end{eqnarray*}
where $|G| = \sum_{j=1}^{p-1} \sum_{k=j+1}^p  G_{jk}$ is the number of nonzero entries in the upper triangular part of $G$, $\Lambda$ is a diagonal matrix with diagonal entries $\{\lambda_{jk}, \, 1 \le j < k < p\}$, and $\Lambda_u$ is the sub-matrix of $\Lambda$ after removing the columns and rows corresponding to the zero indices in the upper triangular part of $G$ \citep{jalali2020bconcord}. 
In the above, $I(\cdot)$ stands for the indicator function.

\subsection{Incorporating graph structure for variable selection}

We denote a variable indicator $\gamma = \left\{\gamma_1, \gamma_2,  \ldots, \gamma_p\right\}$ such that $\gamma_j = 1$ if and only if $\beta_j \neq 0$, for $1 \le j \le p$. 
Let $  \beta_\gamma \in \bbR^{|\gamma|}$ be the vector formed by the active components in $\beta$ corresponding to model $\gamma$, where $|\gamma| = \sum_{j=1}^p \gamma_j$ is the number of nonzero entries in $\gamma$.
For any matrix $A \in \bbR^{q \times p}$ with $p$ columns, let $A_\gamma \in \bbR^{q \times |\gamma|}$ represent the submatrix formed from the columns of $A$ corresponding to the nonzero indices in model $\gamma$. 

For variable selection, we consider the following hierarchical prior over $\beta$:
\begin{eqnarray}
& \beta_\gamma \mid   \gamma \sim N_{|\gamma|} \left(0, \tau^2 I_{|\gamma|}\right), \label{beta}
\\
& \pi(  \gamma \mid G) \propto \exp\left(-a |\gamma| +   b \gamma^T {G} \gamma \right)I(|\gamma| <  R_2) ,   \label{MRF}
\end{eqnarray}
for some constants $a >0$, $b \ge 0$ and a positive integer $0\le R_2 \le p$.
Prior \eqref{beta} can be seen as a collection of slabs of spike and slab priors for regression coefficients \citep{narisetty2014bayesian,yang:2016}, where $\tau^2$ is the variance of the slab. 
Prior \eqref{MRF} is called an MRF prior on the variable indicator $\gamma$.
It encourages the inclusion of variables connected to other variables through the adjacency matrix $G$. 
MRF priors have been used in the variable selection literature including \citet{Joint:Network:2016, Li:Zhang:2010} and \citet{Stingo:Vannucci:2010}. 
Note that the hyperparameter $a$ in \eqref{MRF} corresponds to a penalty for large models, and $b$ determines how strongly an adjacency matrix $G$ affects inclusion probabilities of variables.
We can jointly infer a variable indicator $\gamma$ and an adjacency matrix $G$ by considering $b >0$, whereas $b=0$ leads to a separate inference of $\gamma$ and $G$.

\section{Posterior Computation} \label{sec:computation}
Model \eqref{model:spec1} is equivalent to letting $Y_i = I(Z_i \ge 0)$,
where $Z_i$ is an underlying continuous variable that has a normal distribution with mean $X_i^T\beta$ and variance 1. As we shall demonstrate subsequently, one can exploit this reparameterization to formulate a Gibbs sampler for posterior inference.
Let $Z = (Z_1, Z_2, \ldots, Z_n)^T$.
Combining this with the CONCORD generalized likelihood \eqref{likelihood_concord} and priors \eqref{off-diag}--\eqref{MRF}, the full posterior of $Z, \beta, \gamma, \Omega$ and $G$ is given by 
\bea
\pi( Z, \beta, \gamma, \Omega, G \mid Y,X  ) 
&\propto &  \exp \Big\{-\frac 1 {2}(Z - X_\gamma\beta_\gamma)^T(Z - X_\gamma\beta_\gamma)\Big\} \prod_{i = 1}^{n} \big\{Y_iI(Z_i \ge 0) + (1-Y_i)I(Z_i < 0)\big\} \\
&&\times\,\,  \pi(\gamma \mid G) \prod_{j: \gamma_j=1}(2\pi\tau^2)^{-1/2}\exp\left\{-\beta_{j}^2/(2\tau^2)\right\}\prod_{j: \gamma_j=0} I (\beta_{j} = 0)  \\
&&\times\,\, \exp\bigg\{n\sum_{j=1}^p \log \omega_{jj} -  \frac 1 2 \sum_{j=1}^p\sum_{i=1}^n \Big(       \omega_{jj}x_{ij} + \sum_{k\neq j}\omega_{jk}x_{ik}\Big)^2 - \sum_{j=1}^p \lambda_j\omega_{jj} \bigg\} \\
&&\times\,\,  \pi(G)  \mathop{\prod\prod}_{1\le j<k\le p} \Big\{(1-G_{jk})\delta_0(\omega_{jk}) + G_{jk}\lambda_{jk}^{1/2}/(2\pi)^{1/2}\exp\big(-\lambda_{jk}\omega_{jk}^2/2\big)\Big\} .
\eea
For the selection of shrinkage parameters $\lambda_{jk}$ and $\lambda_{j}$, following \cite{Park2008} and \cite{jalali2020bconcord}, we assign independent gamma prior distributions on each shrinkage parameter, i.e., $\lambda_{jk} \sim \mbox{Gamma}(r,s)$ for $1\le j < k \le p$ and $\lambda_j \sim \mbox{Gamma}(r,s)$ for $1 \le j \le p$, where $r$ and $s$ are some fixed positive hyperparameters. 


\subsection{Gibbs sampler}
We suggest using the standard Gibbs sampling for posterior inference. 
In particular, when sampling the off-diagonal entries of $\Omega$ and $G$, we modify the entrywise Gibbs sampler proposed by \cite{jalali2020bconcord} due to the MRF prior. 
For any matrix $A =(a_{jk}) \in\bbR^{p\times p}$ and $1 \le j \le k \le p$, let $A_{-jk}$ denote all the upper triangular entries of $A$, including diagonals, except $a_{jk}$. 
For $1 \le j \le p$, let $\beta_{-j} \in \bbR^{p-1}$ and $X_{-j} \in \bbR^{n \times (p-1)}$ denote the $\beta$ vector without the $j$th predictor and the submatrix of $X$ corresponding to $\beta_{-j}$, respectively. 
Let $\tilde{X}_j \in \bbR^n$ be the $j$th column of $X$.
The above full posterior leads to the following Gibbs sampler.
\begin{itemize}
	\item For  $1 \le i \le n$, generate $Z_i$ via the following conditional distribution,
	\begin{equation*} \label{gibbs_Z}
	\pi(Z_i \mid   Y, X, \beta  ) \propto  \begin{cases}
	N(Z_i \mid X_i^T \beta, 1) \mathbbm{1}\left(Z_i > 0\right), \quad\mbox{if } Y_i = 1,\\
	N(Y_i \mid X_i^T \beta,1) \mathbbm{1}\left(Z_i < 0\right), \quad\mbox{if } Y_i = 0 .
	\end{cases} 
	\end{equation*}

	\item For $1 \le j \le p$,
	set $\gamma_j = 0$ if $|\gamma_{-j}| = R_2-1$.
	Otherwise, generate $\gamma_j$ from the conditional distribution,
	\begin{align*}
	\gamma_j \mid X, Z, G, \gamma_{-j}, \beta_{-j} \sim \mbox{Bernoulli}\Big(\frac{d_j}{1 + d_j}\Big)  ,
	\end{align*}
	where $d_j = (\sigma_{j}/\tau^2 )^{1/2}\exp\big\{-a + 2b\sum_{i \neq j} \gamma_i G_{ij} + \mu_j^2/ (2 \sigma_{j}) \big\}$,
	$\sigma_{j} = (  \tilde{X}_{j}^T \tilde{X}_{j} + \tau^{-2} )^{-1}$ and $\mu_{j} = \sigma_{j} \tilde{X}_{j}^T (Z - X_{-j}\beta_{-j})$.
	
	\item For $1 \le j \le p$, generate $\beta_{j}$ based on the following spike and slab distribution,
	\begin{align*}
	\beta_{j} \mid X, Z, G, \gamma_{j}, \beta_{-j} \sim (1 - \gamma_j)\delta_0  + \gamma_jN(\mu_{j}, \sigma_{j}).
	\end{align*}

	\item For $1 \le j < k \le p$, 
	set $G_{jk} =0 $ if $|G_{-jk}| = R_1-1$.
	Otherwise, generate $G_{jk}$ based on
	\begin{align*}
		G_{jk}  \mid  \Omega_{-jk}, \gamma, X \sim \mbox{Bernoulli} \Big(\frac{c_{jk}}{1+c_{jk}}\Big),
	\end{align*}	
	where $S =  n^{-1} \sum_{i=1}^n X_i X_i^T  = (s_{jk})$ and 
	\bea
	&&a_{jk} = s_{jj} + s_{kk} + \frac{\lambda_{jk}}n, \quad  b_{jk} = \sum_{k^\prime \neq k} \omega_{jk^\prime}s_{kk^\prime} + \sum_{j^\prime \neq j} \omega_{j^\prime k}s_{jj^\prime}, \\
	&&c_{jk} = \frac q {1-q} \Big(\frac{\lambda_{jk}}{na_{jk}}\Big)^{\frac 1 2}\exp\Big(\frac{nb_{jk}^2}{2a_{jk}} + 2b\gamma_j\gamma_k\Big).
	\eea
	
	\item For $1 \le j < k \le p$, generate $\omega_{jk}$ based on the following spike and slab distribution,
	\begin{align*} 
		\omega_{jk} \mid G_{jk},  \Omega_{-jk}, \gamma, X \sim (1 - G_{jk})\delta_0(\omega_{jk}) + G_{jk} N\Big(-\frac{b_{jk}}{a_{jk}},\frac 1 {na_{jk}}\Big).
	\end{align*}

	\item  For $1 \le j < k \le p$, the conditional distribution of $\lambda_{jk}$ is given by 
	\begin{align*}
		\lambda_{jk} \mid \Omega \sim \mbox{Gamma} \big(r + 1/2, \omega_{jk}^2/2 + s \big) .
	\end{align*}

	\item For $1 \le j \le p$, the conditional distribution of $\lambda_j$ is given by
	\begin{align} \label{gibbs_lambda_j}
	\lambda_{j} \mid \omega_{jj} \sim \mbox{Gamma} \big(r + 1, \omega_{jj} + s \big) .
	\end{align}

	\item For $1 \le j \le p$, the conditional distribution of $\omega_{jj}$ is 
	$\pi(\omega_{jj} \mid \Omega_{-jj}, X) \propto \omega_{jj}^n \exp \big\{-ns_{jj}\omega_{jj}^2/2 - \omega_{jj}(\lambda_{j}+nb_j) \big\}$,
	whose normalizing constant is intractable, where $b_j = \sum_{j^\prime \neq j} \omega_{jj^\prime} s_{jj^\prime}$.
	As suggested by \cite{jalali2020bconcord}, we set $\omega_{jj}$ as the unique mode of $\pi(\omega_{jj} \mid \Omega_{-jj}, X)$, 
	\begin{align} \label{gibbs_omega_j}
	\omega_{jj}^\star \,\,=\,\, \frac{-(\lambda_j + nb_j) + \sqrt{(\lambda_j + nb_j)^2 + 4n^2 s_{jj}}}{2ns_{jj}}.
	\end{align}	
\end{itemize}
When sampling $\gamma_j$ and $G_{jk}$, we are using the conditional posteriors after integrating out $\beta_{j}$ and $\omega_{jk}$ respectively, rather than using the full conditional posterior.
This is to ensure that the Markov chain will be irreducible and converge, where the same trick has been commonly used, for examples, in \cite{Yang:Naveen:2018} and \cite{Xu:Ghosh:2015}.

\begin{remark}
An extensive numerical study conducted by \citet{jalali2020bconcord} showed that $\pi(\omega_{jj} \mid \Omega_{-jj}, X)$ puts most of its mass around the mode \eqref{gibbs_omega_j}.
By using this fact, we simply approximate the nonstandard density using the degenerate distribution at the mode for fast inference.
Otherwise, one can employ a Metropolis-Hastings algorithm to obtain samples from $\pi(\omega_{jj} \mid \Omega_{-jj}, X)$.
For example, the uniform distribution ${\rm Unif}( \omega_{jj}^\star/2, \, 2 \omega_{jj}^\star )$ can be used as a Metropolis-Hastings kernel.
%
%
\end{remark}

\section{Theoretical Properties}\label{sec:theory}

For any positive sequences $a_n$ and $b_n$, we denote (i) $a_n \gg b_n$ if $a_n /b_n \lra \infty$ as $n\to\infty$,
(ii) $a_n = O(b_n)$ if there exists a constant $C>0$ such that $a_n / b_n \le C$, 
(iii) $a_n \sim b_n$ if $a_n = O(b_n)$ and $b_n = O(a_n)$, and
(iv) $a_n = o(b_n)$ if $a_n / b_n \lra 0$ as $n\to\infty$, 
For any $a = (a_1, a_2, \ldots, a_p)^T\in \bbR^p$, we denote vector norms by $\|a\|_1 = \sum_{j=1}^p |a_j|$, $\|a\|_2 = \big( \sum_{j=1}^p a_j^2 \big)^{1/2}$ and $\|a\|_{\max} = \max_{1\le j \le p} |a_j|$.

In this section, we investigate asymptotic theoretical properties of the proposed Bayesian joint variable and graph selection method.
We are interested in whether the joint posterior for the variable and graph is concentrated on each true value.
Let $\beta_0  = ( \beta_{0,j} ) \in \bbR^p$ be the true coefficient vector, and $\gamma_0 =(\gamma_{0,j}) \in \{ 0,1 \}^p$ be the binary vector indicating locations of nonzero entries in $\beta_0$, i.e., $\gamma_{0,j} = I( \beta_{0,j} \neq 0 )$ for $j=1,2, \ldots, p$.
Let $\Omega_0 = (\omega_{0,jk}) \in \bbR^{p\times p}$ be the true precision matrix of $X_i$, and $G_0 = ( G_{0,jk} )  \in \{0,1\}^{p \times p}$ be the corresponding adjacency matrix.
Based on these quantities, we assume that the true data-generating mechanism is $Y_i \mid X_i, \beta_0  \,\overset{ind}{\sim}\, {\rm Ber}( \Phi(X_i^T \beta_0) )$ with a random predictor vector $X_i$ such that $\C(X_i) = \Omega_0^{-1}$, for  $i=1,2, \ldots, n$.
The following assumptions were made in order to demonstrate the theoretical properties.
In the below, $\bbP_0$ and $\bbE_0$ denote the probability measure and expectation, respectively, under the true data-generating mechanism.

\noindent{\bf Condition (A1)} (Conditions on $n$ and $p$)
$p = p_n \ge n$ and $\log p =o( n )$ as $n\to\infty$.

\noindent{\bf Condition (A2)} (Conditions on the design matrix)
For $X_i \in \bbR^p$, $i=1,2, \ldots, p$, we assume the following:
\begin{enumerate}
	\item[(i)] (sub-gaussianity) There exists a constant $C>0$ such that $\bbE_0 \exp ( \alpha^T X_i ) \le \exp ( C \|\alpha\|_2^2 )$ for all $\alpha \in\bbR^p$.
		
	\item[(ii)] (bounded eigenvalues) There exists a constant $0<\epsilon_0 <1$ such that
	$\epsilon_0 \le \lambda_{\min}(\Omega_0) \le \lambda_{\max}(\Omega_0)  \le \epsilon_0^{-1}$.
	
	\item[(iii)] (boundedness) $\bbP_0\big( \|X_i\|_{\max}  \le M \big) = 1$ for some constant $M>0$.
	
	\item[(iv)]  $(|G_0| + 1 ) ^2 \log p= o (n)$ and $\Omega_{0, \min} \equiv \min_{(j,k): G_{0,jk} = 1} \omega_{0,jk}^2 \gg  \{|G_0| \log p +( \log n) / 2\}  /n$.
	
\end{enumerate}

\noindent{\bf Condition (A3)} (Conditions on $\beta_0$)
$|\gamma_0| = O( 1)$, $\|\beta_0\|_1 = O(1)$ and  $\beta_{0,\min}^2 \equiv \min_{j \in \gamma_0 }  \beta_{0,j}^2 \ge C_{\beta_0} \log p /n$ for some constant $C_{\beta_0}>0$.

\noindent{\bf Condition (A4)} (Conditions on $\Omega_0$)
$(|G_0| + 1 ) ^2 \log p= o (n)$ and $\Omega_{0, \min} \equiv \min_{(j,k): G_{0,jk} = 1} \omega_{0,jk}^2 \gg  \{|G_0| \log p +( \log n) / 2\}  /n$.

\noindent{\bf Condition (A5)} (Conditions on the hyperparameter $q$)
$q  = p^{- C_q |G_0|}$, where $C_q = 16 (1 \vee c_0)^2 / (1 \wedge \epsilon_0)$, for some constant $c_0>0$ defined in Lemma S3 of \cite{jalali2020bconcord}.

\noindent{\bf Condition (A6)} (Conditions on the other hyperparameters)
For some constants $1/2 < d < 1$, $\delta>0$ and $C_a>0$, $R_1 = ( n / \log p)^{\frac{1}{2} }$,
$R_2 = (n / \log p)^{\frac{1-d}{2} }$, $\tau^2 \sim n^{-1} p^{2 + 2\delta}$, $a = C_a \log p$ and $b = o \big( (\log p /n)^{1-d}  \big)$.

Condition (A1) demonstrates the high-dimensional setting, where the number of variables $p$ is larger than the sample size $n$.
It allows $p$ to grow at a rate $\exp \{ o(n) \}$ as $n\to\infty$.
Similar conditions have been used in the literature including \cite{narisetty2014bayesian} and \cite{lee2021bayesian} to prove selection consistency of coefficient vector.

Condition (A2) shows the conditions for each row, $X_i$, of the random design matrix $X$.
The first condition implies that a linear combination of $X_i = (x_{i1}, x_{i2}, \ldots, x_{ip})^T$ has a sufficiently light tail satisfying sub-gaussianity.
The second condition requires that the eigenvalues of precision matrix $\Omega_0$ are bounded.
\cite{liu2019empirical} and \cite{cao2021joint} also used this condition for linear regression models with random design matrix.
The third condition requires each component of $X_i$ is bounded with probability $1$, where \cite{narisetty2019} adopted a similar condition for a deterministic design matrix.
By assuming these conditions for $X$, we can efficiently control the eigenvalues of $n^{-1} X_\gamma^T X_\gamma$ and the Hessian matrix of \eqref{model:spec1} for any reasonably large model $\gamma$, with large probability tend to $1$.
For example, condition (A2) holds if $X_i = \Omega_0^{-1/2} Z_i$, where $Z_i \overset{i.i.d.}{\sim} {\rm Unif}( [ -\sqrt{3} , \sqrt{3} ]^p )$ for $i=1,2,\ldots, n$ and $\|\Omega_0 \|_1 = O(1)$.
Here, $\|\cdot\|_1$ denotes the matrix $\ell_1$-norm.

Condition (A3) means that the true regression coefficient $\beta_0$ has finite numbers of nonzero entries and a bounded $\ell_1$-norm.
It holds that if we assume $\|\beta_0\|_{\max} = O(1)$.
For examples, \cite{johnson2012bayesian} and \cite{narisetty2014bayesian} assumed similar conditions.
Note that we still allow, as the number of variables increases, the magnitude of the smallest coefficient converge to zero at the rate of $\log p /n$.
This can describe a situation in which the importance of meaningful variables decreases as the number of variables grows.

Condition (A4) requires the number of nonzero off-diagonal entries in $\Omega_0$ is at most $O (\sqrt{n/ \log p})$.
\cite{banerjee2015bayesian}, \cite{xiang2015high} and \cite{lee2021bayesiandecomposable} used similar conditions for high-dimensional precision matrices.
Furthermore, condition (A4) allows the magnitude of the smallest nonzero off-diagonal elements of $\Omega_0$ converge to zero at the rate $(|G_0| \log p + \log n) / n$.
We adopt these conditions from \cite{jalali2020bconcord} to use their results.

Among conditions (A5) and (A6), $q  = p^{- C_q |G_0|}$ and $a = C_a \log p$ mean that the prior should impose a sufficient penalty to large $|G|$ and $|\gamma|$, respectively.
These are standard assumptions for Bayesian inference of high-dimensional precision matrix and regression vector.
For examples, see \cite{liu2019empirical}, \cite{jalali2020bconcord}, \cite{cao2019posterior} and \cite{martin2017empirical}.
The condition $\tau^2 \sim n^{-1} p^{2 + 2\delta}$ implies that the variance of slab part should be sufficiently large, where $\tau^2$ essentially plays a role as a penalty for large $|\gamma|$.
The other conditions, $R_1 = ( n / \log p)^{\frac{1}{2} }$ and $R_2 = (n / \log p)^{\frac{1-d}{2} }$ control the size of $|G|$ and $|\gamma|$, respectively, while $b = o \big( (\log p /n)^{1-d}  \big)$ controls the strength of $\gamma^T G \gamma$ term in $\pi(\gamma \mid G)$.
Similar conditions can be found in \cite{jalali2020bconcord} and \cite{cao2021joint}.

With these conditions at hand, we are now ready to state asymptotic properties of the posterior.
Theorem \ref{thm:gamma_cons} shows the proposed prior enjoys posterior ratio consistency of $\gamma$ given any $G$.
This implies that for any fixed $G$, the true variable indicator $\gamma_0$ is the mode of the conditional posterior $\pi(\gamma \mid G, Y, X)$ with probability tending to $1$.

\begin{theorem}[Posterior ratio consistency of $\gamma$]\label{thm:gamma_cons}
	Suppose conditions (A1)--(A3) and (A6) hold.
	Then, for any $\gamma \neq \gamma_0$ and $G$,
	\bea
	\frac{\pi(\gamma ,G \mid Y, X) }{\pi(\gamma_0 ,G \mid Y, X)}
	&\overset{P}{\lra}& 0  \quad \text{ as } n \to \infty . 
	\eea
\end{theorem}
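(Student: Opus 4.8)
The plan is to express the conditional posterior ratio $\pi(\gamma, G \mid Y,X) / \pi(\gamma_0, G \mid Y,X)$ in a form that separates the contribution of the probit likelihood from the contribution of the priors, and then bound the likelihood ratio uniformly over $\gamma \neq \gamma_0$. Because $G$ is held fixed, the CONCORD generalized likelihood $\mathcal{L}(\Omega)$, the priors on $\Omega$, and $\pi(G)$ all cancel in the ratio. After integrating out $\beta_\gamma$ (which is Gaussian given $\gamma$) against the augmented-data Gaussian likelihood $\exp\{-\tfrac12\|Z - X_\gamma \beta_\gamma\|_2^2\}$, and noting that $Z$ is itself integrated out against the probit constraint, the ratio reduces to a marginal-likelihood ratio $m(\gamma)/m(\gamma_0)$ times a prior ratio coming from $\pi(\gamma \mid G)$. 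The MRF prior ratio contributes a factor of order $\exp\{-a(|\gamma| - |\gamma_0|) + b(\gamma^T G \gamma - \gamma_0^T G \gamma_0)\}$; under Condition (A6), $a = C_a \log p$ dominates and $b$ is negligible, so this factor behaves essentially like $p^{-C_a(|\gamma| - |\gamma_0|)}$, penalizing overfit models and, combined with the $\tau^2$-dependent Occam factor, also controlling the penalty structure.

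The key steps, in order, are as follows. First, I would write the marginal likelihood $m(\gamma) = \int \pi(Y \mid X, \beta_\gamma, \gamma)\, \pi(\beta_\gamma \mid \gamma)\, d\beta_\gamma$ for the probit model with a Gaussian slab, and obtain a workable lower/upper bound — here I would use a Laplace-type or Taylor expansion of the probit log-likelihood around $\widehat{\beta}_\gamma$, controlling the Hessian via the fact (asserted in the discussion of Condition (A2)) that the eigenvalues of $n^{-1} X_\gamma^T X_\gamma$ and of the probit Hessian are bounded above and below uniformly over all models $\gamma$ with $|\gamma| < R_2$, on an event of probability tending to one. Second, I would split into two cases: (a) \emph{non-supersets}, where $\gamma \not\supseteq \gamma_0$, so that $\gamma$ omits at least one truly active coordinate; here the likelihood ratio decays because the omitted signal $\beta_{0,\min}^2 \ge C_{\beta_0}\log p / n$ forces a gain in the log-likelihood of order at least $n \beta_{0,\min}^2 \gtrsim \log p$ per missing variable, which beats the combinatorial entropy $\binom{p}{|\gamma|}$ after taking a union bound; (b) \emph{strict supersets}, where $\gamma \supsetneq \gamma_0$; here the likelihood cannot help, and the prior/Occam penalty — the factor $(\sigma_j/\tau^2)^{1/2}$ per extra variable together with $e^{-a}$ from the MRF prior, i.e. roughly $(n\tau^2)^{-1/2} p^{-C_a} = p^{-1-\delta-C_a}$ up to logs — drives the ratio to zero, again surviving the union bound over the $\binom{p}{|\gamma|}$ such models since $R_2 = (n/\log p)^{(1-d)/2}$. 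Third, I would assemble the union bound $\sum_{\gamma \neq \gamma_0}\pi(\gamma,G\mid Y,X)/\pi(\gamma_0,G\mid Y,X) \overset{P}{\to} 0$ by summing the per-model bounds over all model sizes up to $R_2$, using that the number of extra (or missing) coordinates is at most $R_2$ and $\log\binom{p}{k} \le k\log p$.

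The main obstacle I anticipate is the uniform control of the probit marginal likelihood — more precisely, showing that on a high-probability event the profile log-likelihood difference $\ell_n(\widehat\beta_\gamma) - \ell_n(\widehat\beta_{\gamma_0})$ is bounded below by $c\, n\,\beta_{0,\min}^2$ for every non-superset $\gamma$ simultaneously, with $c$ not depending on $\gamma$. This requires a restricted-eigenvalue/compatibility-type argument for the probit model that is uniform over exponentially many models of size up to $R_2$, invoking Conditions (A1) (so that $R_2 \log p = o(n)$ and concentration inequalities for sub-Gaussian designs apply) and (A2)(i)--(iii). The boundedness of $\|X_i\|_{\max}$ and the sub-Gaussian tail are exactly what make the probit score and Hessian concentrate uniformly; the bounded-eigenvalue condition on $\Omega_0$ transfers to the population Gram matrix, and a standard covering/union-bound argument over sparse supports upgrades this to the empirical Gram matrix. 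Once this uniform identifiability estimate is in hand, the remaining algebra — collecting the $\tau^2$, $a$, and $b$ factors and checking they beat $k\log p$ — is routine given Condition (A6).
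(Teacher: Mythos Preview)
Your proposal is correct and follows essentially the same route as the paper: the same split into strict supersets versus non-supersets, a Laplace-type expansion of the probit marginal likelihood around $\hat\beta_\gamma$ with uniform Hessian control, the $\beta_{0,\min}$ signal separation for missing coordinates, the $(n\tau^2)^{-1/2}e^{-a}$ Occam penalty for supersets, and the negligibility of the MRF $b$-term under (A6). Two refinements the paper makes that you should anticipate when filling in details: for supersets the likelihood \emph{does} increase, by up to $C'(|\gamma|-|\gamma_0|)\log p$, and controlling this gain requires a sub-Gaussian quadratic-form bound $\tilde U^T P_\gamma \tilde U \lesssim (|\gamma|-|\gamma_0|)\log p$ via a projection argument; for non-supersets the paper does not compare profile likelihoods $L_n(\hat\beta_\gamma)-L_n(\hat\beta_{\gamma_0})$ directly but instead embeds $\gamma$ into $\gamma^*=\gamma\cup\gamma_0$ and extracts the signal penalty $\exp\{-c\,n\,\|\hat\beta_{\gamma_0\setminus\gamma}\|_2^2\}$ from the Gaussian integral restricted to the $\gamma$-subspace.
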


To establish posterior ratio consistency of $G$ given $\gamma_0$., we assume the existence of accurate estimates of diagonal entries $\delta =( \omega_{11},  \omega_{22}, \ldots, \omega_{pp})$, say $\hat{\delta}=(\hat{\omega}_{11}, \hat{\omega}_{22}, \ldots, \hat{\omega}_{pp})$, satisfying $\| \delta- \hat{\delta}\|_{\max} = O \big( \sqrt{\log p /n} \big)$ with probability at least $1 - n^{-c}$ for any constant $c>0$.
The existence of these estimates have been commonly assumed for high-dimensional precision matrix estimation \citep{Peng2009SPACE,Khare2015CONCORD}; for example, Proposition 1 in \cite{Peng2009SPACE} provides one way to obtain such estimates of $\delta$.
Because our main focus is selection of $\gamma$ and $G$, not the estimation of $\Omega$, we will work with the conditional posterior of $\gamma$ and $G$ with the estimates $\hat{\delta}$ plugged in.
The next theorem states the posterior ratio consistency result of $G$ given $\gamma_0$ and $\hat{\delta}$, which implies the true graph $G_0$ is the mode of $\pi(G \mid \gamma_0, \hat{\delta}, Y, X)$ with probability tending to $1$.

\begin{theorem}[Posterior ratio consistency of $G$]\label{thm:G_cons}
	Suppose conditions (A2), (A4), (A5)  and $|\gamma_0| = O(1)$ hold.
	Then, for any $G \neq G_0$, 
	\bea
	\frac{\pi(\gamma_0 ,G \mid \hat{\delta},  Y, X) }{\pi(\gamma_0 ,G_0  \mid \hat{\delta}, Y, X)}
	&\overset{P}{\lra}& 0  \quad \text{ as } n \to \infty . 
	\eea
\end{theorem}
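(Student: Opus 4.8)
\quad The plan is to exploit the fact that, after conditioning on $\gamma_0$ and substituting the pilot estimate $\hat\delta$ for the diagonal of $\Omega$, the posterior odds over graphs split into the marginal graph-selection odds of the B-CONCORD model of \citet{jalali2020bconcord} and a correction from the MRF prior that turns out to be asymptotically immaterial. First I would integrate out the latent utilities $Z$, the coefficients $\beta$, and the off-diagonal vector $\xi$ of $\Omega$ (with the diagonal fixed at $\hat\delta$). In the full posterior the probit factor and the prior on $\beta_\gamma$ involve $(Z,\beta,\gamma)$ but neither $\Omega$ nor $G$, so integrating them out produces a factor $f(\gamma,Y,X)$ free of $(\Omega,G)$; integrating $\mathcal L(\Omega)\big|_{\delta=\hat\delta}$ against $\pi(\xi\mid G)$ produces a marginal likelihood $m(G)=m(G,\hat\delta,X)$ free of $\gamma$. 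Since $f(\gamma_0,Y,X)$ then cancels,
\[
\frac{\pi(\gamma_0, G \mid \hat\delta, Y, X)}{\pi(\gamma_0, G_0 \mid \hat\delta, Y, X)}
\;=\;\frac{\pi(\gamma_0\mid G)}{\pi(\gamma_0\mid G_0)}\cdot\frac{\pi(G)\, m(G)}{\pi(G_0)\, m(G_0)}.
\]
This decoupling is the conceptual core of the argument.

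Second, I would bound the MRF factor. With $\pi(\gamma\mid G)=\exp(-a|\gamma|+b\,\gamma^TG\gamma)\,I(|\gamma|<R_2)/C(G)$ and $C(G)=\sum_{|\gamma|<R_2}\exp(-a|\gamma|+b\,\gamma^TG\gamma)$, using $0\le\gamma_0^TG\gamma_0\le|\gamma_0|^2$ in the numerator and the crude bound $\gamma^TG_0\gamma\le 2|G_0|$ in the denominator (so that $C(G_0)\le e^{2b|G_0|}C(G)$) yields
\[
\frac{\pi(\gamma_0\mid G)}{\pi(\gamma_0\mid G_0)}\;\le\;\exp\!\big(b\,|\gamma_0|^2+2b\,|G_0|\big)\qquad\text{for every }G .
\]
Since $|\gamma_0|=O(1)$, $b=o\big((\log p/n)^{1-d}\big)\to0$, and $|G_0|=O\big(\sqrt{n/\log p}\big)$ by Condition (A4), this exponent is of strictly smaller order than the polynomial-in-$p$ rate at which, as recalled next, the graph-model odds $\pi(G)m(G)/\{\pi(G_0)m(G_0)\}$ vanish. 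It therefore suffices to prove $\pi(G)m(G)/\{\pi(G_0)m(G_0)\}\overset{P}{\to}0$ for each $G\neq G_0$ (and uniformly over such $G$ for the posterior-mode statement).

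Third, this remaining quantity is exactly the marginal posterior odds for the graph in the B-CONCORD model with diagonal fixed at $\hat\delta$, and Conditions (A2), (A4), (A5) are calibrated (through $C_q$ and the constant $c_0$ of Lemma S3 in \citet{jalali2020bconcord}) to the hypotheses of their graph-selection consistency result, which I would reproduce. Its ingredients are the entrywise concentration of $S=n^{-1}\sum_i X_iX_i^T$ around $\Omega_0^{-1}$ (from sub-gaussianity (A2)(i) and boundedness (A2)(iii)), used together with bounded eigenvalues (A2)(ii) to control the quantities $na_{jk}$ and $nb_{jk}^2/a_{jk}$ that enter $m(G)$; and a case split over $G\neq G_0$ into overfitted graphs $G\supsetneq G_0$, where the prior penalty $(q/(1-q))^{|G|-|G_0|}$ with $q=p^{-C_q|G_0|}$ beats the $p^{O(1)}$-per-spurious-edge marginal-likelihood gain, and graphs omitting at least one true edge, where the beta-min condition $\Omega_{0,\min}\gg(|G_0|\log p+\log n)/n$ together with $(|G_0|+1)^2\log p=o(n)$ (Condition (A4)) makes $m(G_0)$ exponentially dominant, overriding the prior's at most $p^{C_q|G_0|\,\#\{\text{missing edges}\}}$ preference for sparser graphs. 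A union bound over the $G$'s, with the combinatorial count absorbed by the per-edge prior penalty, then completes the proof.

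The main obstacle is this third step: controlling $m(G)/m(G_0)$ uniformly over the exponentially many graphs demands the sharp entrywise deviation bounds for $S$ — valid on a $\bbP_0$-event of probability $1-o(1)$, intersected with the good event $\|\delta-\hat\delta\|_{\max}=O(\sqrt{\log p/n})$ for the pilot — together with careful bookkeeping of how the plug-in error propagates through the CONCORD exponent ${\rm tr}(\Omega^2S)$. This is, however, precisely the technical content that can be imported from \citet{jalali2020bconcord}: the two features new to the present setting, namely the coupling term $b\,\gamma^TG\gamma$ and the restriction of the diagonal to $\hat\delta$, were disposed of in the second step and merely perturb their estimates at lower order, so the residual work is bookkeeping rather than a new concentration inequality.
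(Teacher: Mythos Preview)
Your proposal is correct and follows the same decomposition as the paper: factor the posterior ratio into an MRF-prior term times the CONCORD-only graph posterior odds, bound the former, and handle the latter via \citet{jalali2020bconcord}. Two differences are worth noting. First, you retain the normalizing constant $C(G)$ of the MRF prior and bound $C(G_0)/C(G)\le\exp(2b|G_0|)$; the paper instead writes $\pi(\gamma_0\mid G)/\pi(\gamma_0\mid G_0)=\exp\{b\gamma_0^T(G-G_0)\gamma_0\}\le\exp(b|\gamma_0|^2)$ directly, omitting $C(G_0)/C(G)$ altogether (effectively treating the MRF specification as an un-normalized joint kernel). With the paper's shortcut the MRF factor is a bounded constant by $|\gamma_0|=O(1)$, so Theorem~1 of \citet{jalali2020bconcord}, which gives $\pi(G_0\mid\hat\delta,X)\overset{P}{\to}1$, can be invoked as a black box; that is the entirety of the paper's four-line proof. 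Your more careful accounting instead forces you to verify that $\exp(2b|G_0|)$ is dominated by the \emph{quantitative} decay rate of the graph odds (and you borrow the (A6) bound on $b$ for this, which the theorem does not actually assume), which in turn requires opening up the Jalali et al.\ argument rather than merely quoting its conclusion --- this is your third step. Second, and relatedly, that third step reproduces the full machinery of the B-CONCORD selection proof, which is unnecessary here though useful groundwork for Theorem~\ref{thm:joint_sel}. Both routes reach the result; yours is more self-contained and arguably more rigorous about the MRF normalization, while the paper's is far more economical.
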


For any $\gamma$ and $G$, note that
\bea
\frac{\pi(\gamma, G \mid Y, X) }{\pi(\gamma_0, G \mid Y, X) }
&=& 
\frac{f(Y \mid X_\gamma, \gamma) \pi(X \mid G) \pi(\gamma\mid G) \pi(G) }{f(Y \mid X_{\gamma_0}, \gamma_0) \pi(X \mid G) \pi(\gamma_0\mid G) \pi(G) }   \\
&=& 
\frac{f(Y \mid X_\gamma, \gamma) \pi(X \mid \hat{\delta}, G) \pi(\gamma\mid G) \pi(G) }{f(Y \mid X_{\gamma_0}, \gamma_0) \pi(X \mid \hat{\delta}, G) \pi(\gamma_0\mid G) \pi(G) }   
\,\,=\,\,
\frac{\pi(\gamma, G \mid \hat{\delta}, Y, X) }{\pi(\gamma_0, G \mid \hat{\delta}, Y, X) } ,
\eea
where $f(Y \mid X_\gamma, \gamma) = \int f(Y \mid X_\gamma, \beta_\gamma) \pi(\beta_\gamma \mid \gamma) d \beta_\gamma$,  $\pi(X \mid G) = \int \pi(X \mid \Omega, G) \pi(\Omega \mid G) d\Omega$ and
 $\pi(X \mid \hat{\delta}, G) = \int \pi(X \mid \xi , \hat{\delta}, G ) \pi(\xi \mid G) d \xi$.
Then, by using the above equality, Theorems \ref{thm:gamma_cons} and \ref{thm:G_cons} imply joint posterior ratio consistency of $\gamma$ and $G$.
Corollary \ref{thm:joint_cons} states the joint selection consistency result.

\begin{corollary}[Joint posterior ratio consistency of $\gamma$ and $G$]\label{thm:joint_cons}
	Suppose conditions (A1)--(A6) hold.
	Then, $\gamma \neq \gamma_0$ and $G \neq G_0$,
	\bea
	\frac{\pi(\gamma ,G \mid \hat{\delta}, Y, X) }{\pi(\gamma_0 ,G_0 \mid \hat{\delta}, Y, X)}
	&\overset{P}{\lra}& 0  \quad \text{ as } n \to \infty . 
	\eea
\end{corollary}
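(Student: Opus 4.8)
The plan is to deduce Corollary~\ref{thm:joint_cons} from Theorems~\ref{thm:gamma_cons} and~\ref{thm:G_cons} by a simple chaining argument, exploiting the factorization identity displayed just before the corollary. The key observation is that the joint posterior ratio factors as
\bea
\frac{\pi(\gamma ,G \mid \hat{\delta}, Y, X) }{\pi(\gamma_0 ,G_0 \mid \hat{\delta}, Y, X)}
&=&
\frac{\pi(\gamma ,G \mid \hat{\delta}, Y, X) }{\pi(\gamma_0 ,G \mid \hat{\delta}, Y, X)}
\cdot
\frac{\pi(\gamma_0 ,G \mid \hat{\delta}, Y, X) }{\pi(\gamma_0 ,G_0 \mid \hat{\delta}, Y, X)} ,
\eea
valid whenever $\pi(\gamma_0, G \mid \hat{\delta}, Y, X) > 0$, which holds almost surely since the priors place positive mass on $(\gamma_0, G)$ for every $G$ with $|G| < R_1$. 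By the identity preceding the corollary, the first factor equals $\pi(\gamma, G \mid Y, X)/\pi(\gamma_0, G \mid Y, X)$, which by Theorem~\ref{thm:gamma_cons} converges to $0$ in $\bbP_0$-probability for every $\gamma \neq \gamma_0$ and every $G$. The second factor converges to $0$ in $\bbP_0$-probability for every $G \neq G_0$ by Theorem~\ref{thm:G_cons} applied with $\gamma = \gamma_0$.

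First I would record that, under conditions (A1)--(A6), the hypotheses of both Theorem~\ref{thm:gamma_cons} (which needs (A1)--(A3) and (A6)) and Theorem~\ref{thm:G_cons} (which needs (A2), (A4), (A5) and $|\gamma_0| = O(1)$, the last being implied by (A3)) are simultaneously in force. Then, for the case $\gamma \neq \gamma_0$ and $G \neq G_0$, I would split into two subcases according to which of the two factors above I control. If $\gamma \neq \gamma_0$, the first factor $\to 0$ in probability and the second factor is bounded by the full quantity $\pi(\gamma_0, G \mid \hat\delta, Y, X)/\pi(\gamma_0, G_0 \mid \hat\delta, Y, X)$ plus the constant $1$ only after we know it is $O_{\bbP_0}(1)$; more cleanly, since $G \neq G_0$ as well, the second factor $\to 0$ in probability by Theorem~\ref{thm:G_cons}, so the product of a $o_{\bbP_0}(1)$ term and an $o_{\bbP_0}(1)$ term is $o_{\bbP_0}(1)$. (Strictly, a product of two sequences each $o_{\bbP_0}(1)$ is $o_{\bbP_0}(1)$, so no boundedness lemma is even needed.) This handles every pair $(\gamma, G)$ with $\gamma \neq \gamma_0$ and $G \neq G_0$, which is exactly the range stated in the corollary.

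There is essentially no obstacle here beyond bookkeeping: the corollary as stated only asserts convergence for a \emph{fixed} pair $(\gamma, G)$ with $\gamma \neq \gamma_0$, $G \neq G_0$, so no union bound over the exponentially many models is required, and the factorization is exact rather than approximate. The one point that deserves a line of care is the well-definedness of the ratios, i.e. that the denominators are positive $\bbP_0$-almost surely; this follows because the MRF prior \eqref{MRF} and the graph prior $\pi(G)$ assign strictly positive probability to $\gamma_0$ (as $|\gamma_0| = O(1) < R_2$ eventually) and to $G$ (as $|G| < R_1$), and the generalized likelihood $\mathcal{L}(\Omega)$ and $f(Y \mid X_\gamma, \gamma)$ are strictly positive. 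If one instead wanted the stronger statement that $\sum_{(\gamma, G) \neq (\gamma_0, G_0)} \pi(\gamma, G \mid \hat\delta, Y, X) \to 0$, the argument would need the summable tail bounds underlying Theorems~\ref{thm:gamma_cons} and~\ref{thm:G_cons} rather than just their pointwise conclusions; but that is beyond what the corollary claims.
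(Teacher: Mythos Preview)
Your proposal is correct and follows essentially the same approach as the paper: factor the joint posterior ratio as
\[
\frac{\pi(\gamma ,G \mid \hat{\delta}, Y, X) }{\pi(\gamma_0 ,G_0 \mid \hat{\delta}, Y, X)}
=
\frac{\pi(\gamma ,G \mid Y, X) }{\pi(\gamma_0 ,G \mid Y, X)}
\cdot
\frac{\pi(\gamma_0 ,G \mid \hat{\delta}, Y, X) }{\pi(\gamma_0 ,G_0 \mid \hat{\delta}, Y, X)},
\]
then apply Theorem~\ref{thm:gamma_cons} to the first factor and Theorem~\ref{thm:G_cons} to the second. The paper's proof is just this factorization followed by a one-line appeal to the two theorems; your additional remarks on verifying the hypotheses, positivity of denominators, and the product of two $o_{\bbP_0}(1)$ sequences being $o_{\bbP_0}(1)$ are valid elaborations but not substantive departures.
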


In fact, the proposed method enjoys called joint selection consistency.
Theorem \ref{thm:joint_sel} shows that the joint posterior of $\gamma$ and $G$ given $\hat{\delta}$ is concentrated around the true values, $\gamma_0$ and $G_0$.
Joint selection consistency guarantees that the posterior mass assigned to $\gamma_0$ and $G_0$ converges to $1$ as $n\to\infty$.
This is a more powerful result than Corollary \ref{thm:joint_cons}, because joint selection consistency implies joint posterior ratio consistency, but not vice versa.

\begin{theorem}[Joint selection consistency of $\gamma$ and $G$]\label{thm:joint_sel}
	Suppose conditions (A1)--(A6) hold.
	Then,
	\bea
	\pi(\gamma_0 ,G_0 \mid \hat{\delta}, Y, X)  &\overset{P}{\lra}& 1 \quad \text{ as }  n\to\infty .
	\eea
\end{theorem}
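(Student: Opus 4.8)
The plan is to derive the statement from \emph{summable} strengthenings of the ratio-consistency bounds behind Theorems \ref{thm:gamma_cons} and \ref{thm:G_cons}. Write $R(\gamma,G) = \pi(\gamma,G\mid\hat\delta,Y,X)/\pi(\gamma_0,G_0\mid\hat\delta,Y,X)$, so that, the model space being finite under (A6), $\pi(\gamma_0,G_0\mid\hat\delta,Y,X)^{-1} = \sum_{\gamma,G} R(\gamma,G) = 1 + \sum_{(\gamma,G)\neq(\gamma_0,G_0)} R(\gamma,G)$; it therefore suffices to show the last sum is $o_{\bbP_0}(1)$. Using the identity displayed just before Corollary \ref{thm:joint_cons}, I would factor $R(\gamma,G) = R_1(\gamma\mid G)\,R_2(G)$ with
\begin{align*}
R_1(\gamma\mid G) &= \frac{f(Y\mid X_\gamma,\gamma)\,\pi(\gamma\mid G)}{f(Y\mid X_{\gamma_0},\gamma_0)\,\pi(\gamma_0\mid G)}, &
R_2(G) &= \frac{\pi(X\mid\hat\delta,G)\,\pi(\gamma_0\mid G)\,\pi(G)}{\pi(X\mid\hat\delta,G_0)\,\pi(\gamma_0\mid G_0)\,\pi(G_0)},
\end{align*}
so that $\sum_{\gamma,G}R(\gamma,G) = \sum_G R_2(G)\big(1 + \sum_{\gamma\neq\gamma_0}R_1(\gamma\mid G)\big)$. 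The proof then reduces to the two claims (i) $\sup_G \sum_{\gamma\neq\gamma_0} R_1(\gamma\mid G) = o_{\bbP_0}(1)$ and (ii) $\sum_{G\neq G_0} R_2(G) = o_{\bbP_0}(1)$, which together give $\sum_{\gamma,G}R(\gamma,G) = 1+o_{\bbP_0}(1)$ and hence $\pi(\gamma_0,G_0\mid\hat\delta,Y,X)\to 1$ in $\bbP_0$-probability.

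For (i) I would sharpen the estimates in the proof of Theorem \ref{thm:gamma_cons} into bounds on $R_1(\gamma\mid G)$ that decay geometrically in $|\gamma\,\triangle\,\gamma_0|$, uniformly over $G$, and then sum them against the model counts. Split $\gamma\neq\gamma_0$ into the under-fitted case $\gamma\not\supseteq\gamma_0$ and the over-fitted case $\gamma\supsetneq\gamma_0$. In the under-fitted case the beta-min condition (A3), $\beta_{0,\min}^2\ge C_{\beta_0}\log p/n$, together with the restricted-eigenvalue control of $n^{-1}X_\gamma^TX_\gamma$ afforded by (A2), forces the probit marginal-likelihood ratio $f(Y\mid X_\gamma,\gamma)/f(Y\mid X_{\gamma_0},\gamma_0)$ below $p^{-c\,|\gamma_0\setminus\gamma|}$ with $c$ as large as desired, which dominates the $\binom{p}{|\gamma\setminus\gamma_0|}$ choices of the extra coordinates. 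In the over-fitted case a Laplace analysis of the integral defining $f(Y\mid X_\gamma,\gamma)$ yields a penalty $(Cn\tau^2)^{-(|\gamma|-|\gamma_0|)/2}\asymp p^{-(1+\delta)(|\gamma|-|\gamma_0|)}$ from the choice $\tau^2\sim n^{-1}p^{2+2\delta}$ in (A6), which beats the $\binom{p}{|\gamma|-|\gamma_0|}\le p^{|\gamma|-|\gamma_0|}$ such models; the prior ratio $\pi(\gamma\mid G)/\pi(\gamma_0\mid G) = \exp\{-a(|\gamma|-|\gamma_0|) + b(\gamma^TG\gamma-\gamma_0^TG\gamma_0)\}$ only adds decay through $a=C_a\log p$, while $|b(\gamma^TG\gamma-\gamma_0^TG\gamma_0)|$ is $o(1)$ uniformly in $G$ under (A6) (each quadratic form is at most $R_2^2$ and $b=o((\log p/n)^{1-d})$ with $R_2=(n/\log p)^{(1-d)/2}$), so the MRF term is negligible. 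Summing the resulting geometric series over $|\gamma|<R_2$ gives (i), everything on the single event of probability $1-o(1)$ supplied by (A2).

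For (ii), conditionally on the plugged-in $\hat\delta$ the marginal model for $X$ coincides with the BCONCORD model of \citet{jalali2020bconcord}, so $R_2(G)$ equals their $G$-posterior ratio times the extra MRF factor $\pi(\gamma_0\mid G)/\pi(\gamma_0\mid G_0) = \exp\{b\,\gamma_0^T(G-G_0)\gamma_0\}\le\exp\{b\,|\gamma_0|^2\} = e^{o(1)}$, which is negligible since $|\gamma_0|=O(1)$ and $b=o(1)$ under (A6). The summability $\sum_{G\neq G_0}(\text{BCONCORD }G\text{-ratio}) = o_{\bbP_0}(1)$ is exactly the content underlying Theorem \ref{thm:G_cons}: for under-fitted graphs $G\not\supseteq G_0$ the signal condition $\Omega_{0,\min}\gg\{|G_0|\log p+(\log n)/2\}/n$ in (A4) drives the ratio below the $\binom{\binom{p}{2}}{|G_0\setminus G|}$ count, while for over-fitted graphs the combination of $q=p^{-C_q|G_0|}$ in (A5) with the $(\lambda_{jk}/n)^{1/2}$ factors obtained by integrating out the slab entries beats the $\binom{\binom{p}{2}}{|G|}$ count; the cap $|G|<R_1=(n/\log p)^{1/2}$ keeps all the spectral bounds of \citet{jalali2020bconcord} valid. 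Summing over $|G|<R_1$ gives (ii).

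The main obstacle is step (i): obtaining probit marginal-likelihood ratio bounds that are exponentially small in $|\gamma\,\triangle\,\gamma_0|$ \emph{simultaneously over all} $\gamma$ with $|\gamma|<R_2$ on a single $\bbP_0$-event of probability $1-o(1)$. This requires making the Laplace / log-concavity analysis of $f(Y\mid X_\gamma,\gamma)$ uniform through a union bound over models --- which is precisely where the sub-gaussianity and boundedness in (A2)(i),(iii) and the cap $R_2$ enter --- and then splicing it with the plugged-in estimates $\hat\delta$ and the BCONCORD bounds so that the $\gamma$- and $G$-parts do not interfere. Carrying both refinements through while tracking constants, so that the penalties $p^{-(1+\delta)}$, $p^{-C_a}$, $q$ and $\lambda_{jk}^{1/2}/n^{1/2}$ strictly dominate the corresponding binomial model counts, is the technical heart of the argument.
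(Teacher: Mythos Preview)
Your proposal is correct and follows essentially the same route as the paper: factor the posterior ratio as $R(\gamma,G)=R_1(\gamma\mid G)\,R_2(G)$, upgrade the pointwise ratio bounds behind Theorems~\ref{thm:gamma_cons} and~\ref{thm:G_cons} to summable ones (splitting $\gamma\neq\gamma_0$ into over-/under-fitted and invoking the BCONCORD strong selection result for $G$), and absorb the MRF perturbation via $bR_2^2=o(1)$ and $b|\gamma_0|^2=o(1)$. Your identity $\sum_{\gamma,G}R(\gamma,G)=\sum_G R_2(G)\bigl(1+\sum_{\gamma\neq\gamma_0}R_1(\gamma\mid G)\bigr)$ is in fact a slightly cleaner bookkeeping than the paper's, which writes only the double sum over $\gamma\neq\gamma_0,\,G\neq G_0$ and leaves the two ``boundary'' slices implicit.
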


\section{Simulation Studies}\label{sec:simulation}

In this section, we demonstrate the performance of the proposed method in various settings.
For $i = 1, 2, \ldots, n$, we simulate the data from
$Y_i =  I(Z_i \ge 0),$
where $Z_i = X_i\beta_{0} +\epsilon_i$, $\epsilon_i \sim N(0, 1)$ and $X_i = (x_{i1},  x_{i2}, \ldots, x_{ip})^T \overset{i.i.d.}{\sim} N_p(0, \Sigma_0)$, with the sample size $n$ and the number of predictors $p$.
Throughout the simulation study, we fix $n =100$.
If the atlas segments the brain into $p$ different anatomical sections, then, for example, we can consider $p$ as the number of brain regions.
In this case, the objective of joint inference would be to learn the abnormal functional activities among the significant brain regions that contribute to the disease onset.

Among these $p$ predictors, we assume that the first ten are active and consider the following four settings for the true coefficient vector $\beta_0$ to include different combinations of small and large signals.
\begin{itemize}
	\item Setting 1: All the nonzero entries of $\beta_{0}$ are set to 3.
	\item Setting 2: All the nonzero entries of $\beta_{0}$ are generated from $\mbox{Unif}(1.5, 3)$.
	\item Setting 3: All the nonzero entries of $\beta_{0}$ are set to 1.5.
	\item Setting 4: All the nonzero entries of $\beta_{0}$ are generated from $\mbox{Unif}(0.5,1.5)$.
\end{itemize}

For the true precision matrix $\Omega_0 = \sg_0^{-1}$, we consider the following four scenarios.
\begin{itemize}
	\item Scenario 1: 
	For $p=150$, we set all the diagonal entries to be 1 and $\Omega_{0, i1} = \Omega_{0, 1i} = 0.3$ for $i = 2, 3, \ldots, 10$, and set all the remaining entries to be 0. 
	
	\item Scenario 2: 
	For $p=150$, we consider a banded structures of $\Omega_0$ with all the unit diagonals, where $\Omega_{0, i, i+1} = \Omega_{0, i+1, i} = 0.3$, for $i = 1, 2, \ldots,p-1$.
	
	\item Scenario 3: 
	For $p=150$, we consider another banded structures of $\Omega_0$ with all the unit diagonals, where $\Omega_{0, i, i+1} = \Omega_{0, i+1, i} = 0.5, \Omega_{0, j, j+2} = \Omega_{0, j+2, j} = 0.25,$ for $i = 1, 2, \ldots,p-1, j = 1,2, \ldots, p-2$. 

	\item Scenario 4: 
	The true precision matrix $\Omega_0$ is set to be the same as in Scenario 1, but with $p=300$.
	This scenario will show the performance of the proposed method in high dimensions.
	
\end{itemize}


We will refer to our proposed joint selection method coupled with Bayesian spike and slab CONCORD as J.BSSC. 
In terms of variable selection, we first compare the performance of J.BSSC with other existing methods including Lasso \citep{tibshirani1996regression}, elastic net \citep{zou2005regularization} and the Bayesian joint selection method based on stochastic search structure learning (SSSL) \citep{Joint:Network:2016,wang2015scaling}, hereafter referred to as J.SSSL. 

The tuning parameters in Lasso and elastic net were chosen by 10-fold cross-validation.
For Bayesian methods, as discussed by \cite{Joint:Network:2016}, we suggest using the hyperparameters $a=2.75$ and $b=0.5$ for the MRF prior as default. 
Furthermore, to show the benefits of joint modeling, we also implement the setting with $b = 0$ for J.BSSC, which corresponds to the Bayesian method modeling the variable and precision matrix separately. 
The other hyperparameters were set at $a_0=0.1, b_0= 0.01,\tau^2= 1, q=0.005, r = 10^{-4}$ and  $s = 10^{-8}$. 
The initial state for $\gamma$ was set at $p$-dimensional zero vector, i.e., the empty model, while the initial state for the inverse covariance matrix was chosen by the graphical lasso (GLasso) \citep{Friedman2007glasso}. For posterior inference, $2,000$ posterior samples were drawn with a burn-in period of $2,000$.
As the final model, we chose the indices having posterior inclusion probability larger than $0.5$, which is called the median probability model. 
When the posterior probability of the posterior mode is larger than $0.5$, the median probability model corresponds to the posterior mode \citep{barbieri2004optimal}.

To evaluate the performance of  variable selection, the sensitivity, specificity,  Matthews correlation coefficient (MCC) and mean-squared prediction error (MSPE) are reported at Tables \ref{table:comp1} to \ref{table:comp4}.
The criteria are defined as
\bea
\text{Sensitivitiy}  &=&     \frac{TP}{TP+FN} ,   \\
\text{Specificity}  &=&    \frac{TN}{TN+FP}   ,  \\
\text{MCC}  &=&     \frac{TP \times TN - FP\times FN}{\sqrt{(TP+FP)(TP+FN)(TN+FP)(TN+FN)}}  ,	  \\ 
\text{MSPE}  &=&    \frac{1}{n_{\rm test}} \sum_{i=1}^{n_{\rm test}}     \big\{ \Phi(X_{{\rm test} , i } ^T \hat{\beta})  - Y_{{\rm test},i}  \big\}^2,  
\eea
where TP, TN, FP and FN are the true positive, true negative, false positive and false negative, respectively,
and $\hat{\beta}$ denotes the estimated coefficient based on each method.
%
%
For Bayesian methods, the usual GLM estimates based on the selected variables were used as $\hat{\beta}$.
We generated test samples and corresponding predictors $Y_{{\rm test}, 1}, Y_{{\rm test}, 2} \ldots, Y_{{\rm test}, n_{\rm test}}$ and $X_{{\rm test}, 1}, X_{{\rm test}, 2} \ldots, X_{{\rm test}, n_{\rm test}}$, respectively, with $n_{\rm test}=50$ to calculate the MSPE.

\begin{table}[!tb]
	\centering\footnotesize
	\caption{
		The summary statistics for Scenario 1 are represented for different settings, which corresponds to different choice of the true coefficient $\beta_0$.
	}\vspace{.15cm}
	\begin{tabular}{c c c c c | c c c c}
		\hline 
		& \multicolumn{4}{c}{ Setting 1 } & \multicolumn{4}{c}{ Setting 2 } \\ 
		& Sensitivity & Specificity & MCC & MSPE & Sensitivity & Specificity & MCC & MSPE \\ \hline
		J.BSSC $(b=\frac{1}{2})$  &0.87  &0.99  &0.87  &0.08 &0.79 &0.99 &0.80 &0.10 \\ 
		J.BSSC $(b=0)$ &0.61 &1 &0.76 &0.12  &0.46  &1   &0.63    &0.18   \\ 
		J.SSSL  &0.35 &0.97 &0.43 &0.24  &0.25  &0.97   &0.35    &0.29   \\ 
		Lasso  &0.72  &0.98  &0.69   &0.12  &0.80  &0.98   &0.75    &0.12   \\ 
		Elastic &0.90  &0.93  &0.62   &0.20  &1  &0.94   &0.70    &0.20   \\  \hline \hline 
		& \multicolumn{4}{c}{ Setting 3 } & \multicolumn{4}{c}{ Setting 4 } \\ 
		& Sensitivity & Specificity & MCC  & MSPE & Sensitivity & Specificity & MCC  & MSPE \\ \hline
		J.BSSC $(b=\frac{1}{2})$ &0.67  &0.99  &0.73   &0.14  &0.85  &0.99   &0.89    &0.08   \\ 
		J.BSSC $(b=0)$ &0.41  &0.99  &0.55   &0.18 &0.50  &1   &0.69    &0.16   \\ 
		J.SSSL  &0.20  &0.98  &0.32   &0.34  &0.32  &0.98   &0.40    &0.27   \\ 
		Lasso  &0.69  &0.98  &0.68   &0.14  &0.82  &0.98   &0.79    &0.09   \\ 
		Elastic &0.95  &0.94  &0.69   &0.20  &0.84  &0.97   &0.75    &0.19   \\  \hline  
	\end{tabular}\label{table:comp1}
\end{table}

\begin{table}[!tb]
	\centering\footnotesize
	\caption{
		The summary statistics for Scenario 2 are represented for different settings, which corresponds to different choice of the true coefficient $\beta_0$.
	}\vspace{.15cm}
	\begin{tabular}{c c c c  c | c c c c}
		\hline 
		& \multicolumn{4}{c}{ Setting 1 } & \multicolumn{4}{c}{ Setting 2 } \\ 
		& Sensitivity & Specificity & MCC & MSPE & Sensitivity & Specificity & MCC & MSPE \\ \hline
		J.BSSC $(b=\frac{1}{2})$ &1  &1  &1   &0.05  &1  &1   &1    &0.05   \\ 
		J.BSSC $(b=0)$ &0.90  &1  &0.95   &0.10  &0.88  &1   &0.90    &0.14   \\ 
		J.SSSL  &0.52  &0.98  &0.63   &0.20  &0.41  &0.97   &0.42    &0.21   \\ 
		Lasso  &0.74  &0.89  &0.44   &0.19  &0.66 &0.90 &0.41 &0.18   \\ 
		Elastic &0.70  &0.86   &0.40    &0.24  &0.50  &0.93   &0.38    &0.24   \\  \hline \hline 
		& \multicolumn{4}{c}{ Setting 3 } & \multicolumn{4}{c}{ Setting 4 } \\ 
		& Sensitivity & Specificity & MCC  & MSPE & Sensitivity & Specificity & MCC  & MSPE \\ \hline
		J.BSSC $(b=\frac{1}{2})$ &1  &0.99  &0.95   &0.08  &0.90  &0.99   &0.89    &0.11   \\ 
		J.BSSC $(b=0)$ &0.83  &0.99  &0.85   &0.12  &0.76   &0.99  &0.84   &0.18 \\ 
		J.SSSL  &0.36  &0.97  &0.39   &0.21   &0.34   &0.96 &0.29    &0.23   \\ 
		Lasso  &0.64  &0.92  &0.43   &0.19  &0.61  &0.89   &0.36    &0.17   \\ 
		Elastic &0.62  &0.89  &0.40   &0.24  &0.60  &0.86   &0.32    &0.23   \\  \hline  
	\end{tabular}\label{table:comp2}
\end{table}

\begin{table}[!tb]
	\centering\footnotesize
	\caption{
		The summary statistics for Scenario 3 are represented for different settings, which corresponds to different choice of the true coefficient $\beta_0$.
	}\vspace{.15cm}
	\begin{tabular}{c c c c  c | c c c c}
		\hline 
		& \multicolumn{4}{c}{ Setting 1 } & \multicolumn{4}{c}{ Setting 2 } \\ 
		& Sensitivity & Specificity & MCC & MSPE & Sensitivity & Specificity & MCC & MSPE \\ \hline
		J.BSSC $(b=\frac{1}{2})$ &0.92  &1  &0.96   &0.09  &0.77  &0.98   &0.71    &0.13   \\ 
		J.BSSC $(b=0)$ &0.90 &1 &0.94 &0.10  &0.52  &1  &0.65   &0.12   \\ 
		J.SSSL   &0.49  &0.98   &0.57  &0.20   &0.43  &0.98  &0.49   &0.20 \\ 
		Lasso  &0.51 &0.92 &0.35 &0.22  &0.41 &0.94 &0.33 &0.22   \\ 
		Elastic &0.55 &0.86 &0.36 &0.24  &0.48 &0.89 &0.32 &0.24   \\  \hline \hline 
		& \multicolumn{4}{c}{ Setting 3 } & \multicolumn{4}{c}{ Setting 4 } \\ 
		& Sensitivity & Specificity & MCC  & MSPE & Sensitivity & Specificity & MCC  & MSPE \\ \hline
		J.BSSC $(b=\frac{1}{2})$ &0.83  &0.96  &0.69   &0.18   &0.55  &0.99   &0.67    &0.17   \\ 
		J.BSSC $(b=0)$ &0.56 &1 &0.70 &0.17  &0.41  &1   &0.62    &0.18   \\ 
		J.SSSL &0.30  &0.97  &0.32   &0.24   &0.25   &0.97 &0.29    &0.26 \\
		Lasso  &0.43 &0.94 &0.33 &0.22 &0.56 &0.92 &0.38 &0.19   \\ 
		Elastic &0.52 &0.88 &0.37 &0.24 &0.55 &0.91 &0.37 &0.23   \\  \hline  		
	\end{tabular}\label{table:comp3}
\end{table}

\begin{table}[!tb]
	\centering\footnotesize
	\caption{
		The summary statistics for Scenario 4 are represented for different settings, which corresponds to different choice of the true coefficient $\beta_0$.
	}\vspace{.15cm}
	\begin{tabular}{c c c c  c | c c c c}
		\hline 
		& \multicolumn{4}{c}{ Setting 1 } & \multicolumn{4}{c}{ Setting 2 } \\ 
		& Sensitivity & Specificity & MCC & MSPE & Sensitivity & Specificity & MCC & MSPE \\ \hline
		J.BSSC $(b=\frac{1}{2})$ &0.78  &1  &0.86   &0.07  &0.74  &1   &0.85    &0.08   \\ 
		J.BSSC $(b=0)$ &0.57 &1 &0.72   &0.07  &0.48  &1   &0.65    &0.13   \\ 
		J.SSSL  &0.40  &0.99  &0.43   &0.16  &0.31  &0.99  &0.33   &0.19    \\ 
		Lasso  &0.70  &0.99  &0.73   &0.08  &0.73  &0.98   &0.61    &0.06   \\ 
		Elastic &0.79  &0.99  &0.75   &0.20  &0.75  &0.99   &0.77    &0.18   \\  \hline \hline 
		& \multicolumn{4}{c}{ Setting 3 } & \multicolumn{4}{c}{ Setting 4 } \\ 
		& Sensitivity & Specificity & MCC  & MSPE & Sensitivity & Specificity & MCC  & MSPE \\ \hline
		J.BSSC $(b=\frac{1}{2})$ &0.70  &1  &0.78 &0.12  &0.64  &1   &0.72    &0.17   \\ 
		J.BSSC $(b=0)$ &0.45  &1  &0.61   &0.15  &0.38  &1   &0.51    &0.16   \\ 
		J.SSSL    &0.25  &0.98   &0.29    &0.19 &0.22  &0.98  &0.24   &0.21 \\ 
		Lasso  &0.72 &0.98  &0.61   &0.11  &0.70  &0.97   &0.60    &0.13   \\ 
		Elastic &0.67  &0.98  &0.59   &0.18  &0.59  &0.99   &0.65    &0.19   \\  \hline  
	\end{tabular}\label{table:comp4}
\end{table}

The sensitivity, specificity, MCC and MSPE, under different scenarios, are reported at Tables \ref{table:comp1}--\ref{table:comp4} to evaluate the variable selection performance.
We notice that compared to regularization methods (Lasso and elastic net), the proposed joint selection approach (J.BSSC) tends to have better specificity and MCC. 
The poor specificity of the regularization methods has also been discussed in previous literature in the sense that selection of the regularization parameter using cross-validation is optimal with respect to prediction but tends to include too many noise predictors \citep{meinshausen2006high}.
This leads to relatively larger numbers of errors for the regularization methods compared with those for the Bayesian joint selection methods. 
Among all Bayesian approaches, under most of settings, the proposed J.BSSC approach (with $b = 0.5$ or $b = 0$) outperforms J.SSSL based on all criteria, which shows the benefit of the proposed joint method incorporating the graph structure through the CONCORD generalized likelihood. 
Interestingly, compared with J.SSSL that adopts the Metropolis-Hastings algorithm for variable selection, the performance of the proposed Gibbs sampler is significantly better in terms of almost all the measures.
Furthermore, J.BSSC with $b = 0.5$ tends to have a slightly lower specificity but significantly higher sensitivity, MCC and lower MSPE compared with J.BSSC with $b = 0$.
This could be caused by the proposed method frequently visiting graph-linked variables due to the MRF prior. 
We also found that the proposed J.BSSC overall works better than other methods especially in the strong signal setting (i.e., Setting 1). 
This is because as signal strength gets stronger, the consistency conditions of our method are easier to satisfy which leads to better performance. 
To sum up, the above observation indicates that the proposed method can achieve good variable selection performance under a variety of configurations with different data generation mechanisms.


\begin{table}[!tb]
	\centering
	\caption{
		The summary statistics for graph selection under Setting 1 and Scenario 1 are represented.
	}\vspace{.15cm}
	\begin{tabular}{ccccc}
		\hline
		& Sensitivity & Specificity & MCC  & \#Error \\ \hline
		J.BSSC & 1           & 1           & 0.90 & 2       \\
		J.SSSL &1             &1             &0.87      &3         \\
		GLasso & 1           & 0.98        & 0.19 & 239     \\
		CLIME  & 1           & 0.98        & 0.18 & 256     \\
		TIGER  & 1           & 1           & 0.73 & 8      \\ \hline
	\end{tabular} \label{table:comp5}
\end{table}

\begin{table}[!tb]
	\centering
	\caption{
		The summary statistics for precision matrix estimation under Setting 1 and Scenario 1 are represented.
	}\vspace{.15cm}
	\begin{tabular}{ccccc}
		\hline
		& $E_1$   & $E_2$   & $E_3$   & $E_4$   \\ \hline
		J.BSSC               & 0.13 & 0.21 & 0.08 & 0.28 \\
		J.SSSL               &8.01      &7.26      &1.86      &11.95      \\
		GLasso               & 0.37 & 0.24 & 0.19 & 0.19 \\
		CLIME                & 1.51 & 2.22 & 0.58 & 4.16 \\
		TIGER                & 1.47 & 1.91 & 0.31 & 3.48 \\ \hline
	\end{tabular} \label{table:comp6}
\end{table}

We also briefly present the performance of graph selection and precision matrix estimation for J.BSSC.
We compare the performance of J.BSSC with other existing methods including J.SSSL \citep{Joint:Network:2016,wang2015scaling}, GLasso \citep{Friedman2007glasso}, the constrained $\ell_1$-minimization for inverse matrix estimation (CLIME) \citep{cai2011constrained} and the tuning-insensitive approach for optimally estimating Gaussian graphical models (TIGER) \citep{liu2017tiger}. 
The tuning parameters for GLasso and TIGER were chosen by the criterion of stability approach to regularization selection (StARS) \citep{Liu2010STARS}. 
We used 10-fold cross-validation to select the penalty parameter for CLIME.
For GLasso and TIGER, the final models were constructed by collecting the nonzero entries in the estimated precision matrix. 
In our simulation settings, CLIME could not produce exact zeros, so we chose the final graph estimate by thresholding the absolute values of the estimated precision matrix at 0.1. 

To evaluate the performance of graph selection and precision matrix estimation, we report the results at Tables \ref{table:comp5} and \ref{table:comp6}, where each simulation setting is repeated for 20 times. 
The results under different scenarios are omitted because they gave similar conclusions, and only the results under Scenario 1 are presented in the tables.
In Table \ref{table:comp5}, \#Error denotes the number of errors, i.e., FP+FN.
For a matrix norm $\|\cdot\|$ and an estimator $\hat{\Omega}$, the relative error $\|\Omega_0 - \hat{\Omega}\| / \|\Omega_0\|$ is chosen as a criterion.
In Table \ref{table:comp6}, $E_1$, $E_2$, $E_3$ and $E_4$ represent the relative errors based on the matrix $\ell_1$-norm, the matrix $\ell_2$-norm (spectral norm), the vector $\ell_2$-norm (Frobenius norm) and the vector $\ell_\infty$-norm (entrywise maximum norm), respectively.

Based on the results in Table \ref{table:comp5}, in terms of graph selection, joint selection approaches (J.BSSC and J.SSSL) outperform other contenders estimating a graph $G$ without incorporating information about $\gamma$.
This suggests that joint selection using an MRF prior can benefit not only variable selection performance but also graph selection performance.
Furthermore, Table \ref{table:comp6} shows that J.BSSC performs significantly better than J.SSSL in terms of precision matrix estimation.
In fact, J.BSSC also outperforms the other contenders for all the criteria considered.
Therefore, it can be interpreted that joint selection improves the estimation performance, and in particular, it is more preferable to use CONCONRD for precision matrix estimation.

\begin{figure}[tb]
	\centering
	\includegraphics[width=100mm]{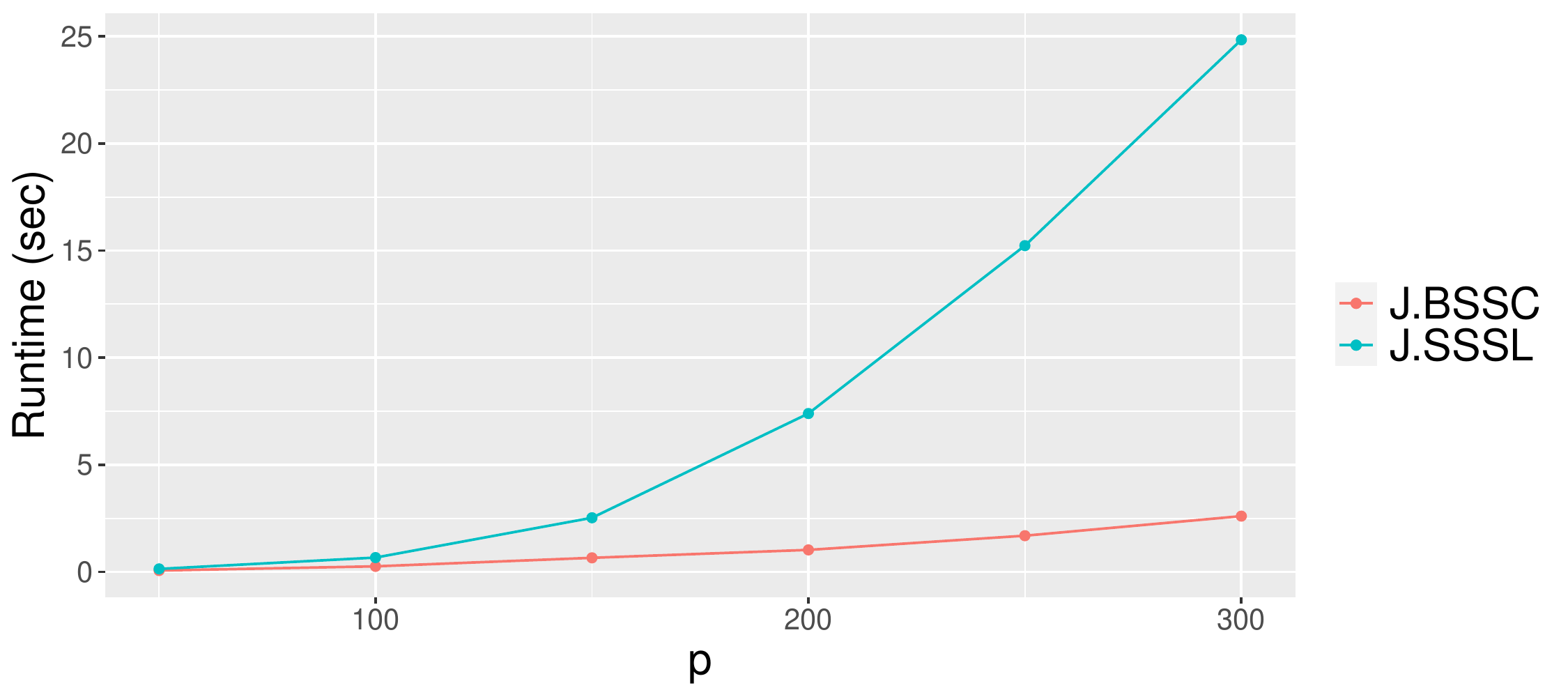}
	\caption{The comparison of average wall-clock seconds per iteration under different dimensions.}
	\label{run_time}
\end{figure}

In addition, as noted in \cite{jalali2020bconcord}, BSSC is computationally much more efficient compared with SSSL.
In Figure \ref{run_time}, we plot the run time comparison between J.BSSC and J.SSSL under different values of $p$ coded in R. 
The averaged computation times for J.BSSC are significantly smaller than those for J.SSSL, and the gap between the two gets larger as $p$ grows.
Even in terms of the memory requirement, J.BSSC needs a significantly smaller memory than SSSL. 
For example, J.SSSL requires more than 20 GB  while J.BSSC achieves the goal with 0.22 GB of memory when  $p = 300$. 
Furthermore, based on asymptotic results, one can expect that our method will give accurate inference results as we have more observations, while asymptotic properties of the Bayesian method proposed by \cite{Joint:Network:2016}  are still in question.

\section{Aberrant Functional Activities in the Parkinson's Disease Cohort} \label{sec:PD}

Parkinson’s disease (PD) was first described by Dr. James Parkinson in 1817 as ``shaking palsy". It is a chronic, progressive neurodegenerative disease characterized by both motor and nonmotor features. As one of the most common neurodegenerative disorders, the disease has a significant clinical impact on patients, families, and caregivers through its progressive degenerative effects on mobility and muscle control. Research suggests that the pathophysiological changes associated with PD may start before the onset of motor features and may include a number of nonmotor presentations, such as sleep disorders, depression, and cognitive changes. Evidence for this preclinical phase has driven the enthusiasm for research that focuses on early diagnosis and preventive therapies of PD \citep{Schrag2015}.

In recent years, neuroimaging has been increasingly employed to aid the risk stratification in PD.  Among a variety of neuroimaging technologies, resting-state fMRI (rs-fMRI) is regarded as a promising technique for precisely locating the abnormal spontaneous activities in neuropsychological disease \citep{Wang2019}. Several rs-fMRI-based methods including regional homogeneity (ReHo), the amplitude of low-frequency fluctuation, and functional connectivity provide a task-free approach to explore spontaneous brain activity and connectivity among networks in different brain regions of PD patients. In this section, we apply the proposed joint selection method to rs-fMRI data for simultaneously identifying aberrant functional brain activities and inferring the underlying functional brain network to aid the diagnosis of PD \citep{wei2017aberrant,Cao2020}. 

\subsection{Subjects and data preprocessing}
This study was approved by the Medical Research Ethical Committee of Nanjing Brain Hospital (Nanjing, China) in accordance with the Declaration of Helsinki, and written informed consent was obtained from all subjects. Seventy PD patients and fifty healthy controls (HCs) were recruited. Image data were acquired using a Siemens 3.0-Tesla signal scanner (Siemens, Verio, Germany) in the department of radiology within Nanjing Brain Hospital. Functional imaging data were collected transversely by using a gradient-recalled echo-planar imaging pulse sequence and retrieved from the archive by neuroradiologists. Image preprocessing steps including slice-timing correction and spatial normalization were carried out using the Data Processing Assistant for Resting-State fMRI based on Statistical Parametric Mapping (SPM12) operated on the Matlab platform \citep{DPARSF}.

\subsection{Image feature extraction}
\cite{Zang:Jiang} proposed the method of Regional Homogeneity (ReHo) to analyze characteristics of regional brain activity and to reflect the temporal homogeneity of neural activity. ReHo is defined as a voxel-based measure of brain activity which evaluates the similarity or synchronization between the time series of a given voxel and its nearest neighbors. Abnormal ReHo signals, which are associated with changes in neuronal activity in local brain regions, may be exploited  to analyze the abnormal brain activities and to depict the dynamic brain functional connectivities \citep{ReHo2019,Deng2016}. In particular, we focus on the mReHo maps obtained by dividing the mean ReHo of the whole brain within each voxel in the ReHo map. We further segmented the mReHo maps and extracted all the 112 ROI signals based on the Harvard-Oxford atlas (HOA) using the Resting-State fMRI Data Analysis Toolkit \citep{REST}.

\subsection{Model fitting}
We now consider a probit regression model with the binary disease indicator as an outcome and 112 ReHo radiomic variables as predictors. Various models including the proposed method and other competing approaches will then be implemented to classify subjects based on these extracted features and to learn functional connectivities of the brain. The dataset is randomly divided into a training set (80\%) and a testing set (20\%) while maintaining the PD:HC ratio in both sets. The hyperparameters for all methods are set as in simulation studies. For Bayesian methods, we first obtain the identified variables and then evaluate the testing set performance using standard GLM estimates based on the selected features. The penalty parameters in all frequentist methods are tuned via 10-fold cross validation in the training set. The final prediction results based on the testing set for both Bayesian and frequentist approaches are evaluated using a common threshold 0.5.

\subsection{Results}
In terms of discriminative radiomic features, our method is able to identify abnormal functional brain activities for PD that occur in the regions of interest including right superior frontal gyrus (F1.R), left middle temporal gyrus, anterior division (T2a.L), left angular gyrus (AG.L), right angular gyrus (AG.R), right temporal fusiform cortex, anterior division (TFa.R), right occipital fusiform gyrus (OF.R), left frontal operculum cortex (FO.L) and left putamen (Put.L). 
In Figure \ref{brain_network}, we plot the inferred functional brain network overlaid with selected nodes that correspond to the aforementioned brain regions. The predictive performance of various methods in the test set is summarized in Table \ref{table:MRI}. We can tell from Table \ref{table:MRI} that the predictive performance of the proposed joint selection approach based on BSSC is overall better than that of all the other methods. The proposed J.BSSC approach has higher sensitivity and lower MSPE compared with all the other methods, but yields a lower specificity than Lasso. Based on the most comprehensive measure MCC, our method outperforms all the other methods.

\begin{table}[H]
	\centering\footnotesize
	\caption{
		The summary statistics for prediction performance on the testing set for all methods.}
	\vspace{.15cm}
	\begin{tabular}{c c c c c}
		\hline 
		& Sensitivity & Specificity &MCC & MSPE  \\ \hline
		J.BSSC $(b=\frac{1}{2})$ &0.92 &0.82 &0.74 &0.09   \\ 
		J.BSSC $(b=0)$ &0.67  &0.73  &0.39  &0.18 \\ 
		J.SSSL   &0.58  &0.73  &0.31  &0.24 \\ 
		Lasso  &0.67  &0.91  &0.59  &0.16 \\ 
		Elastic &0.75  &0.82  &0.57  &0.16  \\\hline 
	\end{tabular}\label{table:MRI}
\end{table}

\begin{figure} [tb]
	\centering
	\includegraphics[width=100mm]{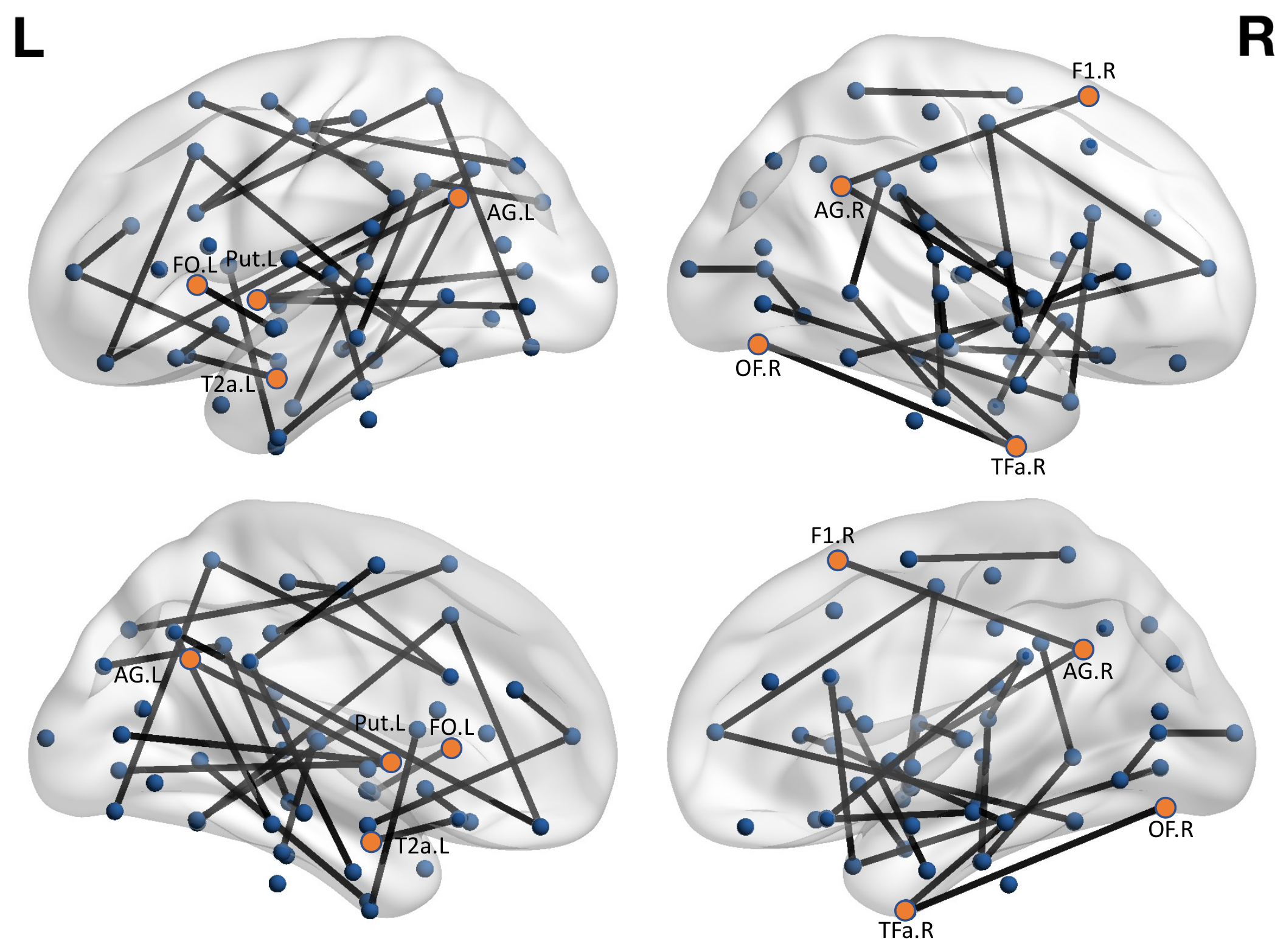}
	\caption{The lateral and medial view of the functional brain network inferred by J.BSSC. Nodes selected by J.BSSC are marked in orange.}
	\label{brain_network}
\end{figure}

Furthermore, J.BSSC identifies regions of interest that are coherent with the altered functional features in cortical and subcortical regions discovered in previous studies \citep{fmri:1,fmri:2,fmri:3}. 
These findings suggest disease-related alterations of functional activities that provide physicians sufficient information to get involved with early diagnosis and treatment. The inferred functional brain connectivities also seem plausible and are primarily located in the typical resting-state network (RSN) including default-mode network (DMN), visual network (VIN) and basal ganglia network (BGN). The identified regions in DMN include the left middle temporal gyrus, anterior division and angular gyrus. We also discover abnormal VIN in the right temporal fusiform cortex, anterior division and right occipital fusiform gyrus, as well as unusual BGN in the left putamen. RSN reflects the spontaneous neural activities of the blood oxygenation level-dependent signals between temporally correlated brain regions. Compared with the control group, the DMN plays a crucial role in neurodegenerative disorders and normal aging. Several fMRI studies have indicated that the DMN was injured before the cognitive decline in PD \citep{Sandronee172,Koshimori2016}. The BGN has also been observed in pathologies with motor control and altered neurotransmitter systems of dopaminergic processes \citep{Griffanti2018,DeMicco2019}.
A previous study on functional connectivity markers in advanced PD also found functional connectivity features located in the VIN and cerebellar networks that are significantly relevant to classification and provide preliminary evidence that can characterize PD patients compared with HCs \citep{Lin2020102130}. In conclusion, the radiomics-based joint selection approach proposed in this paper has shown that high-order radiomic features that quantify functional brain connectivities and activities can be used for the diagnosis of PD with satisfactory prediction accuracy.

\section{Discussion} \label{sec:discussion}

We propose a Bayesian joint selection method for probit models.
Although it should be rigorously investigated, it is possible to extend the proposed method to other GLMs with network-structured predictors and binary responses.
For example, an extension to logistic regression models, in terms of computation, is straightforward by approximating a logistic distribution to mixture of normal distributions \citep{Albert:Chib:1993,o2004bayesian}.
This approximation enables us to derive a similar Gibbs sampler presented in Section \ref{sec:computation} with some minor changes; for example, see \cite{lee2021bayesian}.
Furthermore, in theoretical aspect, it is highly expected that joint selection consistency (Theorem \ref{thm:joint_sel}) can be achieved in logistic regression models with CONCORD generalized likelihood by applying the techniques in \cite{lee2021bayesian}, which efficiently control the score function and Hessian matrix of logistic models.

Theoretical results in this paper, except Theorem \ref{thm:gamma_cons}, are based on the conditional posteriors given accurate estimates of diagonal entries, $\hat{\delta}$.
This is because we adopt the selection consistency result in \cite{jalali2020bconcord}.
It would be interesting to investigate whether one can obtain selection consistency without conditioning $\hat{\delta}$ to conduct a fully Bayesian inference.
This would need a significant amount of technical modification, so we leave it as future work.

Furthermore, by using CONCORD generalized likelihood, we can enjoy fast computational speed but at the cost of possibly losing the positive definiteness of the precision matrix.
Although it does not harm the primary goal of this paper, the selection of the support of the precision matrix and coefficient vector, it will obviously not be satisfactory when the estimation of the precision matrix is of interest.
Thus, modifying the CONCORD algorithm to ensure positive definiteness of the precision matrix while maintaining fast computation would be another possible direction of future work.


\appendix

\section{Proofs}\label{sec:proofs}

\noindent{\bf Notation.} 
In the rest of the paper, we denote $\bfY_n \equiv Y = (Y_1, Y_2, \ldots, Y_n)^T \in \bbR^n$ and $\bfX_n \equiv X = (X_1, X_2, \ldots, X_n)^T \in \bbR^{n\times p}$.

\noindent{\bf Score function and Hessian matrix.}
For any $\gamma$, let $\eta_{i,\gamma} = X_{i,\gamma}^T \beta_\gamma$.
Then,  the log-likelihood function is 
\bea
L_n(\beta_\gamma)  &=& \sum_{i=1}^n \Big[  Y_i \log \Phi ( X_{i,\gamma}^T \beta_\gamma  )  + (1-Y_i)  \log \big\{  1- \Phi( X_{i,\gamma}^T \beta_\gamma )  \big\}   \Big] .
\eea
The score function and Hessian matrix are given by 
\bea
s_n(\beta_\gamma)  &=& \frac{\partial}{\partial \beta_\gamma} L_n(\beta_\gamma) 
\,\,=\,\, \sum_{i=1}^n X_{i,\gamma} \Big\{  Y_i \frac{\phi( \eta_{i,\gamma} ) }{\Phi(\eta_{i,\gamma}) }  -  (1- Y_i) \frac{\phi(\eta_{i,\gamma}) }{1- \Phi(\eta_{i\gamma})}   \Big\} ,\\
&\equiv&  \bfX_\gamma ^T D(\beta_\gamma) \sg(\beta_\gamma)^{-1/2} \{Y- \mu(\beta_\gamma)\}   , \\
H_n(\beta_\gamma)  &=& \frac{\partial^2}{\partial \beta_\gamma \partial \beta_\gamma^T } L_n(\beta_\gamma) \\
&=&  \sum_{i=1}^n X_{i,\gamma} X_{i,\gamma}^T \bigg[  Y_i \Big\{  \frac{\eta_{i,\gamma} \phi(\eta_{i,\gamma})}{\Phi(\eta_{i,\gamma}) } +    \frac{\phi(\eta_{i,\gamma})^2}{\Phi(\eta_{i,\gamma})^2 }    \Big\}    + (1-Y_i) \Big\{  \frac{-\eta_{i,\gamma} \phi(\eta_{i,\gamma})}{1-\Phi(\eta_{i,\gamma}) } +    \frac{\phi(\eta_{i,\gamma})^2}{ (1- \Phi(\eta_{i,\gamma}))^2 }  \Big\}  \bigg]  \\
&\equiv& \sum_{i=1}^n X_{i,\gamma} X_{i,\gamma}^T \cdot \psi_i(\beta_\gamma)  \,\,\equiv\,\, \bfX_\gamma^T  \Psi_\gamma  \bfX_\gamma
\eea
where $D(\beta_\gamma)  = diag(d_{i}(\beta_\gamma) ) \in \bbR^{n\times n}$ with $d_{i}(\beta_\gamma)  = \phi(\eta_{i,\gamma}) /\sqrt{\Phi(\eta_{i,\gamma}) (1- \Phi(\eta_{i,\gamma}) ) }$, and $\mu(\beta_\gamma) = (\mu_i(\beta_\gamma)  ) \in \bbR^n$ with $\mu_i(\beta_\gamma)  = \Phi(\eta_{i,\gamma} )$, and $\sg(\beta_\gamma) = diag( \sigma_i^2(\beta_\gamma)   ) \in \bbR^{n\times n}$ with $\sigma_i^2 (\beta_\gamma)  =  \Phi(\eta_{i,\gamma} ) (1-  \Phi(\eta_{i,\gamma} ))$, and $\Psi_\gamma  = diag( \psi_i(\beta_\gamma) ) \in \bbR^{n\times n}$.
For simplicity, let $\mu = ( \mu_i( \beta_{0}) )$ and $\sg = diag( \sigma_i^2(\beta_0) )$.

\begin{proof}[Proof of Theorem \ref{thm:gamma_cons}]
	Note that for any $G$,
	\bea
	\pi( \gamma , G \mid \bfY_n , \bfX_n )
	&\propto& f( \bfY_n \mid   \bfX_\gamma , \gamma ) \pi( \bfX_n \mid   G ) \pi(\gamma \mid G)  \pi(G) ,
	\eea
	where 
	\bea
	f( \bfY_n \mid \bfX_\gamma , \gamma ) &=& \int f( \bfY_n \mid \bfX_\gamma , \beta_\gamma) \pi(\beta_\gamma \mid \gamma ) d\beta_\gamma  \\
	&\equiv& \int \exp \big\{ L_n (\beta_\gamma) \big\} \, (2\pi \tau^2)^{- |\gamma|/2 }   \exp \Big( - \frac{1}{2\tau^2} \|\beta_\gamma \|_2^2 \Big) d \beta_\gamma .
	\eea
	Thus, 
	\bea
	\frac{\pi( \gamma , G \mid \bfY_n , \bfX_n )}{\pi( \gamma_0  , G \mid \bfY_n , \bfX_n )}
	&=& \frac{f( \bfY_n \mid \bfX_\gamma , \gamma ) \pi(\gamma \mid G) }{f( \bfY_n \mid \bfX_{\gamma_0} , \gamma_0 ) \pi(\gamma_0 \mid G)}  
	\eea
	and $\pi( \bfX_n \mid  G ) = \int \pi(\bfX_n \mid \Omega,   G) \pi(\Omega \mid G)  d \Omega$.

	First, we focus on overfitted models, $M_1 = \{ \gamma : \gamma \supsetneq \gamma_0 , \, |\gamma| \le R_2 \}$.
	By Taylor's expansion of $L_n(\beta_\gamma)$ around the MLE of $\beta_\gamma$ under the model $\gamma$, say $\hat{\beta}_\gamma$,
	\bea
	L_n(\beta_\gamma) - L_n(\hat{\beta}_\gamma) 
	&=& - \frac{1}{2} (\beta_\gamma - \hat{\beta}_\gamma)^T H_n(\tilde{\beta}_\gamma) (\beta_\gamma - \hat{\beta}_\gamma)
	\eea
	for some $\tilde{\beta}_\gamma$ such that $\|\tilde{\beta}_\gamma - \hat{\beta}_\gamma\|_2 \le \| \beta_\gamma - \hat{\beta}_\gamma \|_2$.
	For any $\beta_\gamma$ such that $\|\beta_\gamma - \beta_{0,\gamma} \|_2 \le C \sqrt{|\gamma| \log p /n } \equiv C w_n$ for some constant $C>0$, by Lemma \ref{lem:betahat}, 
	\bea
	\| \tilde{\beta}_\gamma - \beta_{0,\gamma} \|_2 
	&\le& \|\tilde{\beta}_\gamma - \hat{\beta}_\gamma \|_2 + \| \hat{\beta}_\gamma - \beta_{0,\gamma} \|_2 \\
	&\le& \|{\beta}_\gamma - \hat{\beta}_\gamma \|_2 + \| \hat{\beta}_\gamma - \beta_{0,\gamma} \|_2 \\
	&\le& \|{\beta}_\gamma - {\beta}_{0,\gamma} \|_2 + 2 \| \hat{\beta}_\gamma - \beta_{0,\gamma} \|_2
	\,\,\le\,\, 3 C w_n
	\eea
	uniformly for all $\gamma \in M_1$ with probability at least $1- 2 \exp (-cn)$ for some constant $c>0$.
	Thus, by Lemma \ref{lem:hessian}, 
	\bea
	L_n(\beta_\gamma) - L_n(\hat{\beta}_\gamma) 
	&\le& 
	-\frac{1-\epsilon}{2} (\beta_\gamma - \hat{\beta}_\gamma)^T H_n({\beta}_{0,\gamma}) (\beta_\gamma - \hat{\beta}_\gamma)
	\eea
	for some small constant $\epsilon>0$.
	For any $\beta_\gamma$ such that $\|\beta_\gamma - \hat{\beta}_\gamma \|_2 = C w_n /2$, we have
	\bea
	L_n(\beta_\gamma) - L_n(\hat{\beta}_\gamma) 
	&\le& - \frac{1-\epsilon}{2} \|\beta_\gamma - \hat{\beta}_\gamma \|_2^2 \, \lambda_{\min} \big( H_n(\beta_{0,\gamma} ) \big)  \\
	&\le& - \frac{1-\epsilon}{8} C^2 \lambda |\gamma| \log p \,\,\lra\,\, -\infty \,\, \text{ as } n\to \infty   
	\eea
	with probability at least $1- 2 \exp(-cn)$, by Lemma \ref{lem:hessian_lower}.
	Note that it also holds for any $\beta_\gamma$ such that $\| \beta_\gamma - \hat{\beta}_\gamma \|_2 > C w_n /2$ due to the concavity of $L_n(\cdot)$ and the fact that $\hat{\beta}_\gamma$ maximizes $L_n(\beta_\gamma)$.

	Define $B_\gamma = \{ \beta_\gamma : \|\beta_\gamma - \hat{\beta}_\gamma \|_2 \le C w_n /2 \}$, then $B_\gamma \subset \{ \beta_\gamma : \|\beta_\gamma - \beta_{0,\gamma} \|_2 < C w_n \}$ with probability at least $1- 2\exp(-cn)$ uniformly in $\gamma \in M_1$.
	Therefore, for any $\gamma \in M_1$, with probability at least $1- 2 \exp(-cn)$, 
	\bea
	&&  f( \bfY_n \mid \bfX_\gamma , \gamma ) \pi(\gamma \mid G) \\
	&=&    \int \exp \big\{ L_n (\beta_\gamma) \big\} \, (2\pi \tau^2)^{- |\gamma|/2 }   \exp \Big( - \frac{1}{2\tau^2} \|\beta_\gamma \|_2^2 \Big) d \beta_\gamma  \pi(\gamma \mid G)  \\
	&\le& (2\pi \tau^2)^{- |\gamma|/2} \pi(\gamma \mid G) \exp \big\{  L_n(\hat{\beta}_\gamma) \big\} \bigg[  \exp \Big( - \frac{1-\epsilon}{8} C^2 \lambda |\gamma| \log p \Big)  \, \int_{B_\gamma^c} \exp \Big( - \frac{\|\beta_\gamma\|_2^2}{2\tau^2 } \Big) d \beta_\gamma  \\
	&& \quad\quad\quad + \,\, \int_{B_\gamma} \exp \Big\{  - \frac{1-\epsilon}{2} (\beta_\gamma - \hat{\beta}_\gamma)^T H_n( \beta_{0,\gamma} ) (\beta_\gamma - \hat{\beta}_\gamma)  - \frac{\|\beta_\gamma\|_2^2}{2\tau^2}   \Big\}  d \beta_\gamma \bigg] ,
	\eea
	where 
	\bea
	&& \int_{B_\gamma} \exp \Big\{  - \frac{1-\epsilon}{2} (\beta_\gamma - \hat{\beta}_\gamma)^T H_n( \beta_{0,\gamma} ) (\beta_\gamma - \hat{\beta}_\gamma)  - \frac{\|\beta_\gamma\|_2^2}{2\tau^2}   \Big\}  d \beta_\gamma \\
	&\le& (2\pi)^{|\gamma|/2} \det \Big\{ (1-\epsilon) H_n(\beta_{0,\gamma}) + \tau^{-2} I_{|\gamma|}  \Big\}^{-1/2}
	\eea
	and
	\bea
	&& \exp \Big( - \frac{1-\epsilon}{8} C^2 \lambda |\gamma| \log p \Big)  \, \int_{B_\gamma^c} \exp \Big( - \frac{\|\beta_\gamma\|_2^2}{2\tau^2 } \Big) d \beta_\gamma \\
	&\le& \exp \Big( - \frac{1-\epsilon}{8} C^2 \lambda |\gamma| \log p + \frac{|\gamma|}{2} \log \tau^2  \Big) (2\pi)^{|\gamma|/2}   \\
	&\le& \exp \big( - C' |\gamma| \log p  \big) (2\pi)^{|\gamma|/2} \det \Big\{ (1-\epsilon) H_n(\beta_{0,\gamma}) + \tau^{-2} I_{|\gamma|}  \Big\}^{1/2 -1/2} \\
	&\le& \exp \big( - C'' |\gamma| \log p     \big) (2\pi)^{|\gamma|/2}   \det \Big\{ (1-\epsilon) H_n(\beta_{0,\gamma}) + \tau^{-2} I_{|\gamma|}  \Big\}^{ -1/2} 
	\eea
	for some positive constants $C'$ and $C''$.
	Hence, we have 
	\bean\label{gamma_M1}
	&&  f( \bfY_n \mid \bfX_\gamma , \gamma ) \pi(\gamma \mid G) \nonumber \\
	&\le& (\tau^2)^{-|\gamma|/2} \pi(\gamma \mid G) \exp \big\{ L_n(\hat{\beta}_\gamma) \big\} \,  \det \Big\{ (1-\epsilon) H_n(\beta_{0,\gamma}) + \tau^{-2} I_{|\gamma|}  \Big\}^{ -1/2}  \big( 1 + o(1) \big)
	\eean
	for any $\gamma \in M_1$, with probability at least $1- 2 \exp(-cn)$.
	
	On the other hand, 
	\bea
	&& f( \bfY_n \mid \bfX_{\gamma_0} , \gamma_0 ) \pi(\gamma_0 \mid G) \\
	&=& 
	 \int \exp \big\{ L_n (\beta_{\gamma_0}) \big\} \, (2\pi \tau^2)^{- |\gamma_0|/2 }   \exp \Big( - \frac{1}{2\tau^2} \|\beta_{\gamma_0} \|_2^2 \Big) d \beta_{\gamma_0}  \pi(\gamma_0 \mid G)  \\
	 &\ge& (2\pi \tau^2)^{-|\gamma_0|/2} \pi(\gamma_0\mid G) \exp \big\{ L_n( \hat{\beta}_{\gamma_0} ) \big\}  \int_{B_{\gamma_0}} \exp \Big\{  - \frac{1+\epsilon}{2} (\beta_{\gamma_0} - \hat{\beta}_{\gamma_0})^T H_n( \beta_{0,{\gamma_0}} ) (\beta_{\gamma_0} - \hat{\beta}_{\gamma_0})  - \frac{\|\beta_{\gamma_0}\|_2^2}{2\tau^2}   \Big\}  d \beta_{\gamma_0}  ,
	\eea
	where $A = (1+\epsilon) H_n( \beta_{0, \gamma_0} )$ and
	\bea
	&& \int_{B_{\gamma_0}} \exp \Big\{  - \frac{1+\epsilon}{2} (\beta_{\gamma_0} - \hat{\beta}_{\gamma_0})^T H_n( \beta_{0,{\gamma_0}} ) (\beta_{\gamma_0} - \hat{\beta}_{\gamma_0})  - \frac{\|\beta_{\gamma_0}\|_2^2}{2\tau^2}   \Big\}  d \beta_{\gamma_0}  \\
	&=& (2\pi)^{-|\gamma_0|/2} \det \big( A + \tau^{-2} I_{|\gamma_0|} \big)^{-1/2} \exp \Big[  - \frac{1}{2} \hat{\beta}_{\gamma_0}^T \Big\{  A -  A ( A + \tau^{-2} I_{|\gamma_0|} )^{-1} A   \Big\}  \hat{\beta}_{\gamma_0}   \Big] \\
	&\gtrsim& (2\pi)^{-|\gamma_0|/2} \det \big( A + \tau^{-2} I_{|\gamma_0|} \big)^{-1/2} 
	\eea
	uniformly in $\gamma \in M_1$ with probability at least $1-2\exp(-cn)$, by Lemma 1 in \cite{lee2021bayesian}.
	Hence, we have 
	\bean\label{gamma0_M1}
	f( \bfY_n \mid \bfX_{\gamma_0} , \gamma_0 ) \pi(\gamma_0 \mid G) 
	\,\gtrsim\, (\tau^2)^{-|\gamma_0|/2} \pi(\gamma_0 \mid G) \exp \big\{ L_n(\hat{\beta}_{\gamma_0} ) \big\}   \det \big(  (1+\epsilon) H_n( \beta_{0, \gamma_0} ) + \tau^{-2} I_{|\gamma_0|} \big)^{-1/2}   
	\eean
	for any $\gamma \in M_1$, with probability at least $1- 2 \exp(-cn)$.
	
	Then, \eqref{gamma_M1} and \eqref{gamma0_M1} implies 
	\bea
	\frac{\pi( \gamma , G \mid \bfY_n , \bfX_n )}{\pi( \gamma_0  , G \mid \bfY_n , \bfX_n )}  
	&\lesssim&  \frac{\pi(\gamma\mid G)}{\pi(\gamma_0 \mid G) } (n \tau^2)^{- \frac{1}{2}(|\gamma|- |\gamma_0|)  }  \frac{ \det \big( \frac{1+\epsilon}{n}H_n(\beta_{0,\gamma_0} )  + (n\tau^2)^{-1} I_{|\gamma_0|} \big)^{1/2}  }{ \det \big( \frac{1-\epsilon}{n}H_n(\beta_{0,\gamma} )  + (n\tau^2)^{-1} I_{|\gamma|} \big)^{1/2}  } \\
	&& \quad\quad\quad \times \,\, \exp \Big\{ L_n(\hat{\beta}_{\gamma})  - L_n(\hat{\beta}_{\gamma_0} )   \Big\}  \\
	&\lesssim&  
	\frac{\pi(\gamma\mid G)}{\pi(\gamma_0 \mid G) } (n \tau^2)^{- \frac{1}{2}(|\gamma|- |\gamma_0|)  }  \Big( \frac{2}{\lambda} \Big)^{|\gamma|-|\gamma_0| } 
	\exp \Big\{ L_n(\hat{\beta}_{\gamma})  - L_n(\hat{\beta}_{\gamma_0} )   \Big\}
	\eea
	for any $\gamma \in M_1$, with probability at least $1- 2 \exp(-cn)$, by Lemma 2 in \cite{lee2021bayesian}.
	
	Note that by Taylor's expansion of $L_n(\cdot)$,
	\bea
	L_n(\hat{\beta}_{\gamma})  - L_n(\hat{\beta}_{\gamma_0} ) 
	&\le& L_n(\hat{\beta}_{\gamma})  - L_n({\beta}_{0, \gamma} )  \\
	&=& (\hat{\beta}_\gamma - \beta_{0,\gamma} )^T s_n(\beta_{0,\gamma})  - \frac{1}{2} (\hat{\beta}_\gamma - \beta_{0,\gamma} )^T H_n( \tilde{\beta}_\gamma ) (\hat{\beta}_\gamma - \beta_{0,\gamma} )
	\eea
	for some $\tilde{\beta}_\gamma$ such that $\|\tilde{\beta}_\gamma - \hat{\beta}_\gamma\|_2 \le \| \beta_{0,\gamma} - \hat{\beta}_\gamma \|_2$.
	Again by Taylor's expansion of $s_n(\cdot)$, we have 
	$0 = s_n(\hat{\beta}_\gamma )  = s_n(\beta_{0,\gamma}) - H_n(\tilde{\beta}_\gamma^*) (\hat{\beta}_\gamma - \gamma_{0,\gamma} )$ for some $\tilde{\beta}_\gamma^*$ such that $\|\tilde{\beta}_\gamma^* - {\beta}_{0,\gamma}\|_2 \le \| \hat{\beta}_\gamma  - \beta_{0,\gamma}\|_2$, which implies
	\bea
	(\hat{\beta}_\gamma - \beta_{0,\gamma} )^T s_n(\beta_{0,\gamma})  
	&=& s_n(\beta_{0,\gamma} )^T  H_n( \tilde{\beta}_\gamma^* )^{-1}  s_n(\beta_{0,\gamma})
	\eea
	and
	\bea
	&&  (\hat{\beta}_\gamma - \beta_{0,\gamma} )^T H_n( \tilde{\beta}_\gamma ) (\hat{\beta}_\gamma - \beta_{0,\gamma} )  \\
	&=& s_n(\beta_{0,\gamma} )^T  H_n( \tilde{\beta}_\gamma^* )^{-1}  s_n(\beta_{0,\gamma})
	+
	(\hat{\beta}_\gamma - \beta_{0,\gamma} )^T \big\{H_n( \tilde{\beta}_\gamma ) - H_n( \tilde{\beta}_\gamma^* )  \big\} (\hat{\beta}_\gamma - \beta_{0,\gamma} )  .
	\eea
	Note that $H_n(\beta_\gamma) = \bfX_\gamma^T \Psi_\gamma \bfX_\gamma$ and 
	\bea
	&&  \Big| (\hat{\beta}_\gamma - \beta_{0,\gamma} )^T \big\{H_n( \tilde{\beta}_\gamma ) - H_n( \tilde{\beta}_\gamma^* )  \big\} (\hat{\beta}_\gamma - \beta_{0,\gamma} )  \Big|  \\
	&\le&  \sup_{u \in\bbR^{|\gamma|} : \|u\|_2=1} \Big| u^T \big\{H_n( \tilde{\beta}_\gamma ) - H_n( \tilde{\beta}_\gamma^* )  \big\}  u \Big| \cdot  \|\hat{\beta}_\gamma - \beta_{0,\gamma}\|_2^2 \\
	&\le&  \max_{1\le i\le n} \big|  \psi_i(\tilde{\beta}_\gamma)  -  \psi_i(\tilde{\beta}_\gamma^*)   \big| \cdot \|\bfX_\gamma^T \bfX_\gamma \|  \cdot  \|\hat{\beta}_\gamma - \beta_{0,\gamma}\|_2^2   \\
	&\lesssim&  \max_{1\le i\le n}  \big| X_{i,\gamma}^T \tilde{\beta}_\gamma - X_{i,\gamma}^T \tilde{\beta}_\gamma^* \big| \cdot \|\bfX_\gamma^T \bfX_\gamma \|  \cdot  \|\hat{\beta}_\gamma - \beta_{0,\gamma}\|_2^2    \\
	&\lesssim& R_2 \sqrt{\frac{\log p}{n} } \cdot n \cdot \frac{ R_2 \log p }{n}
	\,\,=\,\, \Big( \frac{ R_2^4 \log p }{n} \Big)^{1/2} \log p \,\,=\,\, o \big( \log p \big) 
	\eea
	uniformly in $\gamma \in M_1$ with probability at least $1- 2 \exp(-cn)$, 
	where the third inequality holds due to the Lipschitz continuity of $\psi_i({\beta}_\gamma)$ (using similar arguments in the proof of Lemma \ref{lem:hessian}) and the fourth inequality holds due to  condition (A2) and \eqref{eigenvals}. 
	Therefore, by Lemma \ref{lem:hessian},
	\bea
	L_n(\hat{\beta}_{\gamma})  - L_n(\hat{\beta}_{\gamma_0} ) 
	&\le& \frac{1}{2(1-\epsilon) }  s_n(\beta_{0,\gamma} )^T  H_n( {\beta}_{0,\gamma} )^{-1}  s_n(\beta_{0,\gamma})  + o (\log p)
	\eea
	uniformly in $\gamma \in M_1$ with probability at least $1- 2 \exp(-cn)$.
	Note that 
	$s_n(\beta_{0,\gamma}) = \bfX_\gamma^T D( \beta_{0,\gamma} ) \tilde{U}$, where 	$\tilde{U} = \sg^{-1/2} (Y - \mu)$, and 
	\bea
	H_n ( \beta_{0,\gamma} ) 
	&=& \sum_{i=1}^n X_{i,\gamma} X_{i,\gamma}^T \psi_i(\beta_{0,\gamma})  \\
	&\ge& \sum_{i=1}^n X_{i,\gamma} X_{i,\gamma}^T \cdot C^{-1} d_i( \beta_{0,\gamma} )^2
	\,\,\ge\,\, C^{-1} \bfX_\gamma^T D( \beta_{0,\gamma} )^2 \bfX_\gamma 
	\eea
	for some constant $C>0$, because $\psi_i(\beta_{0,\gamma})  \ge C^{-1} d_i ( \beta_{0,\gamma})^2$ on $|\eta_{0,i,\gamma}|  \le \|X_i\|_{\max} \| \beta_0\|_1 \le C'$ for some positive constants $C$ and $C'$.
	Let $P_\gamma = D(\beta_{0,\gamma}) \bfX_\gamma ( \bfX_\gamma^T D(\beta_{0,\gamma})^2 \bfX_\gamma )^{-1} \bfX_\gamma^T D(\beta_{0,\gamma})$.
	Thus,
	\bea
	L_n(\hat{\beta}_{\gamma})  - L_n(\hat{\beta}_{\gamma_0} ) 
	&\le& \frac{C}{2(1-\epsilon) }   \tilde{U}^T P_\gamma \tilde{U} + o \big( \log p \big)  \\
	&\le& C' ( |\gamma| - |\gamma_0| ) \log p
	\eea
	for some large positive constants $C$ and $C'$ uniformly in $\gamma \in M_1$ with probability at least $1-  2 \exp(-cn) - p^{-2|\gamma|}$ by Lemma \ref{lem:projection} with $t = 2|\gamma| \log p$ due to condition (A3).

	Then, we have 
	\bea
	\frac{\pi( \gamma , G \mid \bfY_n , \bfX_n )}{\pi( \gamma_0  , G \mid \bfY_n , \bfX_n )}  
	&\lesssim& 
	\frac{\pi(\gamma\mid G)}{\pi(\gamma_0 \mid G) } (n \tau^2)^{- \frac{1}{2}(|\gamma|- |\gamma_0|)  }  \Big( \frac{2}{\lambda} \Big)^{|\gamma|-|\gamma_0| } 
	\exp \Big\{ L_n(\hat{\beta}_{\gamma})  - L_n(\hat{\beta}_{\gamma_0} )   \Big\} \\
	&\lesssim& 
	\exp \big\{  - a ( |\gamma| - |\gamma_0| ) + b \gamma^T {G} \gamma - b \gamma_0^T {G} \gamma_0  \big\}
	\Big\{  p^{-(1+\delta - C )} \frac{2}{\lambda}  \Big\}^{ |\gamma| - |\gamma_0| }  \\
	&\le&  \Big\{  p^{-(1+\delta - C + C' )} \frac{2}{\lambda}  \Big\}^{ |\gamma| - |\gamma_0| }  
	\,\,=\,\, o(1)
	\eea
	uniformly in $\gamma \in M_1$ with probability at least $1-  2 \exp(-cn) - p^{-2|\gamma|}$ for some positive constants $C'$ and $\delta$ due to condition (A6).

	Now we focus on the remaining models, $M_2 = \big\{ \gamma  : \gamma \nsupseteq \gamma_0 , \, |\gamma| \le R_2  \big\}$.
	For any $\gamma \in M_2$, let $\gamma^* = \gamma \cup \gamma_0$  so that $\gamma^* \in M_1^* = \{ \gamma : \gamma \supset \gamma_0 , \, |\gamma| \le R_2 + |\gamma_0| \}$.
	Let $\beta_{\gamma^*}$ be the $|\gamma^*|$-dimensional vector including $\beta_\gamma$ for $\gamma$ and zeros for $\gamma_0\setminus \gamma$.
	By Taylor's expansion, 
	for any $\beta_{\gamma^*}$ such that $\|\beta_{\gamma^*}- \beta_{0,\gamma^*}\|_2 \le C \sqrt{ |\gamma^*| \log p /n} \equiv C w_n'$ for some large constant $C>0$, 
	\bea
	L_n( \beta_{\gamma^*} )
	&=& L_n( \hat{\beta}_{\gamma^*} )  - \frac{1}{2} (\beta_{\gamma^*} - \hat{\beta}_{\gamma^*} )^* H_n(\tilde{\beta}_{\gamma^*} )  (\beta_{\gamma^*} - \hat{\beta}_{\gamma^*} ) \\
	&\le& L_n( \hat{\beta}_{\gamma^*} )  - \frac{1-\epsilon}{2} (\beta_{\gamma^*} - \hat{\beta}_{\gamma^*} )^* H_n({\beta}_{0, \gamma^*} )  (\beta_{\gamma^*} - \hat{\beta}_{\gamma^*} )  \\
	&\le&  L_n( \hat{\beta}_{\gamma^*} )  - \frac{n ( 1-\epsilon)\lambda }{2} \| \beta_{\gamma^*} - \hat{\beta}_{\gamma^*}  \|_2^2
	\eea
	with probability at least $1-2\exp(-cn)$, 
	where the first inequality holds due to Lemmas \ref{lem:hessian} and \ref{lem:betahat}, 
	and the second inequality holds due to Lemma \ref{lem:hessian_lower}.
	Define $B_{\gamma^*} = \{ \beta_\gamma : \|\beta_{\gamma^*} - \hat{\beta}_{\gamma^*} \|_2 \le C w_n' /2 \}$, then $B_{\gamma^*} \subset \{ \beta_\gamma : \|\beta_{\gamma^*} - \beta_{0,{\gamma^*}} \|_2 < C w_n' \}$ with probability at least $1- 2\exp(-cn)$ uniformly in $\gamma \in M_2$.
	Then, for any $\gamma \in M_2$, with probability at least $1-2\exp(-cn)$, 
	\bea
	&&  f(\bfY_n \mid \bfX_{\gamma} , \gamma) \pi(\gamma \mid G )  \\
	&\le& \pi(\gamma \mid G)  \exp \big\{ L_n( \hat{\beta}_{\gamma^*} ) \big\} \bigg[  (2\pi \tau^2)^{- |\gamma|/2 } \int_{B_{\gamma^*}}  \exp \Big\{ -\frac{n(1-\epsilon)\lambda}{2} \|\beta_{\gamma^*} - \hat{\beta}_{\gamma^*} \|_2^2 - \frac{1}{2\tau^2} \|\beta_\gamma\|_2^2   \Big\} d \beta_\gamma  \\
	&& \quad\quad\quad\quad\quad\quad\quad\quad\quad\quad\quad\quad + \,\, (\tau^2)^{-|\gamma|/2}  \exp \big( - C' |\gamma^*| \log p \big)  \bigg]  \\
	&\le& \pi(\gamma \mid G)  \exp \big\{ L_n( \hat{\beta}_{\gamma^*} ) \big\} (\tau^2)^{-|\gamma|/2}  \bigg[ \big\{ n (1-\epsilon) \lambda + \tau^{-2} \big\}^{-|\gamma|/2} \exp \Big\{ - \frac{n(1-\epsilon)\lambda}{2} \| \hat{\beta}_{\gamma_0 \setminus \gamma} \|_2^2  \Big\}  \\
	&& \quad\quad\quad\quad\quad\quad\quad\quad\quad\quad\quad\quad + \,\,  \exp \big( - C' |\gamma^*| \log p \big)  \bigg] \\
	&\le& \pi(\gamma \mid G)  \exp \big\{ L_n( \hat{\beta}_{\gamma^*} ) \big\} (\tau^2)^{-|\gamma|/2}  
	\big\{ n (1-\epsilon) \lambda + \tau^{-2} \big\}^{-|\gamma|/2}   \\
	&& \quad\quad\quad\quad\quad\quad\quad\quad\quad\quad\quad\quad \times \exp \Big\{  - \frac{(1-\epsilon)\lambda}{2} \big( \frac{C_{\beta}}{2} |\gamma_0 \setminus \gamma| - C \big) |\gamma_0 \setminus \gamma| \log p   \Big\}  ( 1+ o(1))
	\eea
	for some positive constants $C$ and $C'$ because $|\gamma_0|=O(1)$,
	\bea
	&& (2\pi)^{-|\gamma|/2} \int_{B_{\gamma^*}} \exp \Big\{ -\frac{n(1-\epsilon)\lambda}{2} \|\beta_{\gamma^*} - \hat{\beta}_{\gamma^*} \|_2^2 - \frac{1}{2\tau^2} \|\beta_\gamma\|_2^2   \Big\} d \beta_\gamma  \\
	 &\le& \big\{ n (1-\epsilon) \lambda + \tau^{-2} \big\}^{-|\gamma|/2} \exp \Big\{ - \frac{n(1-\epsilon)\lambda}{2} \| \hat{\beta}_{\gamma_0 \setminus \gamma} \|_2^2  \Big\}
	\eea
	and
	\bea
	&& \exp \Big\{ - \frac{n(1-\epsilon)\lambda}{2} \| \hat{\beta}_{\gamma_0 \setminus \gamma} \|_2^2  \Big\} \\
	&\le& \exp \Big\{ - \frac{n(1-\epsilon)\lambda}{2} \Big( \frac{1}{2} \|\beta_{0, \gamma_0 \setminus \gamma} \|_2^2  - \| \hat{\beta}_{\gamma_0 \setminus \gamma}   - \beta_{0, \gamma_0 \setminus \gamma} \|_2^2  \Big) \Big\}  \\
	&\le&  \exp \Big\{ - \frac{n(1-\epsilon)\lambda}{2} \Big( \frac{1}{2} \|\beta_{0, \gamma_0 \setminus \gamma} \|_2^2  - C \frac{| \gamma_0 \setminus \gamma| \log p }{n}  \Big) \Big\} \\
	&\le&  \exp \Big\{ - \frac{(1-\epsilon)\lambda}{2} \Big( \frac{C_{\beta_0} }{2} |\gamma_0\setminus \gamma|  - C  \Big) |\gamma_0\setminus \gamma| \log p \Big\} 
	\eea
	due to condition (A3).

	By deriving the lower bound of $f(\bfY_n \mid \bfX_{\gamma_0}. \gamma_0)$ as before, 
	\bea
	\frac{\pi( \gamma , G \mid \bfY_n , \bfX_n )}{\pi( \gamma_0  , G \mid \bfY_n , \bfX_n )}  
	&\lesssim& 
	\frac{\pi(\gamma \mid G)}{\pi(\gamma_0 \mid G) } (n \tau^2)^{- \frac{1}{2}(|\gamma|- |\gamma_0| ) }  
	\frac{ \det \big(  \frac{1+\epsilon}{n} H_n( \beta_{0,\gamma_0}) + (n\tau^2)^{-1} I_{|\gamma_0|}   \big)^{1/2} }{ \{ (1-\epsilon)\lambda +  (n\tau^2)^{-1} \}^{|\gamma|/2} } \\
	&&   \times \,\, \exp \big\{  L_n( \hat{\beta}_{\gamma^*} ) - L_n( \hat{\beta}_{\gamma_0} ) \big\}  
	\exp \Big\{ - \frac{(1-\epsilon)\lambda}{2} \Big( \frac{C_{\beta_0} }{2} |\gamma_0\setminus \gamma|  - C  \Big) |\gamma_0\setminus \gamma| \log p \Big\}  \\
	&\lesssim& \frac{\pi(\gamma \mid G)}{\pi(\gamma_0 \mid G) } (\tilde{C} n \tau^2)^{- \frac{1}{2}(|\gamma|- |\gamma_0| ) }     \\
	&& \times \,\,  \exp \big\{  L_n( \hat{\beta}_{\gamma^*} ) - L_n( \hat{\beta}_{\gamma_0} ) \big\}  \exp \Big\{ - \frac{(1-\epsilon)\lambda}{2} \Big( \frac{C_{\beta_0} }{2} |\gamma_0\setminus \gamma|  - C  \Big) |\gamma_0\setminus \gamma| \log p \Big\} 
	\eea
	for any $\gamma \in M_2$ and some constant $\tilde{C}>0$ with probability at least $1- 2\exp(-cn) - p^{-2|\gamma^*|}$, because 
	\bea
	\frac{ \det \big(  \frac{1+\epsilon}{n} H_n( \beta_{0,\gamma_0}) + (n\tau^2)^{-1} I_{|\gamma_0|}   \big)^{1/2} }{ \{ (1-\epsilon)\lambda +  (n\tau^2)^{-1} \}^{|\gamma|/2} }  
	&\le& \frac{\{ (1+\epsilon)C_2^* +  (n\tau^2)^{-1} \}^{|\gamma_0|/2} }{\{ (1-\epsilon)\lambda +  (n\tau^2)^{-1} \}^{|\gamma|/2} }  \\
	&=& \frac{\{ (1+\epsilon)C_2^* +  (n\tau^2)^{-1} \}^{|\gamma_0|/2} }{\{ (1-\epsilon)\lambda +  (n\tau^2)^{-1} \}^{|\gamma_0|/2} }  
	\{ (1-\epsilon)\lambda +  (n\tau^2)^{-1} \}^{-\frac{1}{2}(|\gamma|-|\gamma_0|) }  \\
	&\lesssim& \tilde{C}^{-\frac{1}{2}(|\gamma|-|\gamma_0|) } .
	\eea
	Therefore, by similar arguments used for $\gamma \in M_1$ case, 
	\bea
	\frac{\pi( \gamma , G \mid \bfY_n , \bfX_n )}{\pi( \gamma_0  , G \mid \bfY_n , \bfX_n )}  
	&\lesssim& 
	\exp \Big\{  - \big( C_a + 1+ \delta + \frac{\log \tilde{C} }{\log p}  \big) (|\gamma|-|\gamma_0| )\log p      + C' (|\gamma^*|- |\gamma|) \log p  \\
	&& \quad\quad - \,\, \frac{(1-\epsilon)\lambda}{2} \Big( \frac{C_{\beta_0} }{2} |\gamma_0\setminus \gamma|  - C  \Big) |\gamma_0\setminus \gamma| \log p    \Big\}   \\
	&=& \exp \Big[  - \big( C_a + 1+ \delta + \frac{\log \tilde{C} }{\log p} - C'  \big) (|\gamma| - |\gamma \cap \gamma_0| ) \log p\\
	&& \quad\quad - \Big\{ \frac{(1-\epsilon)\lambda}{2}\Big( \frac{C_{\beta_0} }{2} |\gamma_0\setminus \gamma|  - C  \Big) - C_a -1 -\delta   \Big\} |\gamma_0 \setminus \gamma| \log p   \Big] \\
	&=& o(1)
	\eea
	uniformly in $\gamma \in M_2$ with probability at least $1- 2\exp(-cn) - p^{-2|\gamma^*|}$ for some positive constants $C_a, C_{\beta_0}$ and $\delta$, which completes the proof. 	
\end{proof}

\begin{proof}[Proof of Theorem \ref{thm:G_cons}]
	By Theorem 1 in \cite{jalali2020bconcord}, under conditions (A2), (A4) and (A5), 
	\bea
	\pi(G_0 \mid \hat{\delta}, \bfX_n)  &\overset{P}{\lra}  1  \quad \text{ as } n\to\infty .
	\eea
	It implies
	\bea
	\frac{\pi(\gamma_0 ,G \mid \hat{\delta},  \bfY_n, \bfX_n) }{\pi(\gamma_0 ,G_0  \mid \hat{\delta}, \bfY_n, \bfX_n)}
	&=& \frac{f(\bfX_n \mid \hat{\delta}, G)  \pi(\gamma_0 \mid G) \pi(G) }{ f(\bfX_n \mid \hat{\delta}, G_0)  \pi(\gamma_0 \mid G_0) \pi(G_0)  }  \\
	&=& \frac{ \pi( G \mid \hat{\delta}, \bfX_n) }{ \pi( G_0 \mid \hat{\delta}, \bfX_n) }  \frac{\pi(\gamma_0 \mid G)}{\pi(\gamma_0 \mid G_0)}    \\
	&=& \frac{ \pi( G \mid \hat{\delta}, \bfX_n) }{ \pi( G_0 \mid \hat{\delta}, \bfX_n) } \exp \big( b \gamma_0^T {G} \gamma_0 - b \gamma_0^T {G}_0 \gamma_0  \big) \\
	&\le&  \frac{ \pi( G \mid \hat{\delta}, \bfX_n) }{ \pi( G_0 \mid \hat{\delta}, \bfX_n) } \exp \big( b |\gamma_0|^2 \big) 
	\,\,\overset{P}{\lra}\,\, 0 \quad \text{ as } n\to\infty  
	\eea
	for any $G \neq G_0$, where $\pi( \bfX_n \mid  \hat{\delta}, G ) = \int \pi(\bfX_n \mid \xi,  \hat{\delta}, G) \pi(\xi \mid G)  d \xi$.
\end{proof}

\begin{proof}[Proof of Corollary \ref{thm:joint_cons}]
	Note that 
	\bea
	\frac{\pi(\gamma ,G \mid \hat{\delta}, \bfY_n, \bfX_n) }{\pi(\gamma_0 ,G_0 \mid \hat{\delta}, \bfY_n, \bfX_n)}
	&=& 
	\frac{\pi(\gamma ,G \mid \hat{\delta},  \bfY_n, \bfX_n) }{\pi(\gamma_0 ,G  \mid \hat{\delta}, \bfY_n, \bfX_n)}
	\frac{\pi(\gamma_0 ,G \mid \hat{\delta},  \bfY_n, \bfX_n) }{\pi(\gamma_0 ,G_0  \mid \hat{\delta}, \bfY_n, \bfX_n)} \\
	&=& 
	\frac{ f(\bfY_n \mid \bfX_{\gamma}, \gamma) \pi(\gamma \mid G) }{ f(\bfY_n \mid \bfX_{\gamma_0} , \gamma_0)  \pi(\gamma_0 \mid G) }
	\frac{\pi(\gamma_0 ,G \mid \hat{\delta},  \bfY_n, \bfX_n) }{\pi(\gamma_0 ,G_0  \mid \hat{\delta}, \bfY_n, \bfX_n)}    \\
	&=& 
	\frac{\pi(\gamma ,G \mid  \bfY_n, \bfX_n) }{\pi(\gamma_0 ,G  \mid \bfY_n, \bfX_n)}
	\frac{\pi(\gamma_0 ,G \mid \hat{\delta},  \bfY_n, \bfX_n) }{\pi(\gamma_0 ,G_0  \mid \hat{\delta}, \bfY_n, \bfX_n)} .
	\eea
	Thus, by applying Theorems \ref{thm:gamma_cons} and \ref{thm:G_cons}, we can complete the proof.
\end{proof}

\begin{proof}[Proof of Theorem \ref{thm:joint_sel}]
	It suffices to show that 
	\bea
	\sum_{\gamma: \gamma \neq \gamma_0} \sum_{G: G \neq G_0}   \frac{\pi(\gamma ,G \mid \hat{\delta}, \bfY_n, \bfX_n) }{\pi(\gamma_0 ,G_0 \mid \hat{\delta}, \bfY_n, \bfX_n)}
	&\overset{P}{\lra}& 0 \quad \text{ as } n \to\infty  .
	\eea
	Note that 
	\bea
	\sum_{\gamma: \gamma \neq \gamma_0} \sum_{G: G \neq G_0}  \frac{\pi(\gamma ,G \mid \hat{\delta}, \bfY_n, \bfX_n) }{\pi(\gamma_0 ,G_0 \mid \hat{\delta}, \bfY_n, \bfX_n)}
	&=& 
	\sum_{G: G \neq G_0}  \bigg\{   \frac{\pi(\gamma_0 ,G \mid \hat{\delta},  \bfY_n, \bfX_n) }{\pi(\gamma_0 ,G_0  \mid \hat{\delta}, \bfY_n, \bfX_n)}  \sum_{\gamma: \gamma \neq \gamma_0} \frac{\pi(\gamma ,G \mid  \bfY_n, \bfX_n) }{\pi(\gamma_0 ,G  \mid \bfY_n, \bfX_n)}   \bigg\}    
	\eea
	and $\{\gamma: \gamma \neq \gamma_0 , |\gamma| \le R_2 \} = M_1 \cup M_2$, where $M_1$ and $M_2$ are defined in the proof of Theorem \ref{thm:gamma_cons}.
	
	By the proof of Theorem \ref{thm:gamma_cons}, we have 
	\bea
	\sum_{\gamma \in M_1 } \frac{\pi(\gamma ,G \mid  \bfY_n, \bfX_n) }{\pi(\gamma_0 ,G  \mid \bfY_n, \bfX_n)}  
	&\lesssim&  
	\sum_{k= |\gamma_0|+1}^{R_2} \sum_{\gamma\in M_1: |\gamma|=k }   \Big\{  p^{-(1+\delta - C + C' )} \frac{2}{\lambda}  \Big\}^{ k - |\gamma_0| }   \\
	&\le&  \sum_{k= |\gamma_0|+1}^{R_2} \binom{p - |\gamma_0|}{k - |\gamma_0|}  \Big\{  p^{-(1+\delta - C + C' )} \frac{2}{\lambda}  \Big\}^{ k - |\gamma_0| }  \\
	&\le& \sum_{k= |\gamma_0|+1}^{R_2}    \Big\{  p^{-(\delta - C + C' )} \frac{2}{\lambda}  \Big\}^{ k - |\gamma_0| }   \,\,=\,\, o(1)
	\eea 
	for some positive constants $C'$ and $\delta$, 
	with probability at least $1-  2 \exp(-cn) - p^{- |\gamma_0|-1 }$, 
	and	
	\bea
	&&  \sum_{\gamma \in M_2 } \frac{\pi(\gamma ,G \mid  \bfY_n, \bfX_n) }{\pi(\gamma_0 ,G  \mid \bfY_n, \bfX_n)} \\
	&\lesssim& 
	\sum_{k =0}^{R_2}  \sum_{\gamma\in M_2: |\gamma|=k}   
	\exp \Big[  - \big( C_a + 1+ \delta + \frac{\log \tilde{C} }{\log p} - C'  \big) (k - |\gamma \cap \gamma_0| ) \log p\\
	&& \quad\quad - \Big\{ \frac{(1-\epsilon)\lambda}{2}\Big( \frac{C_{\beta_0} }{2} |\gamma_0\setminus \gamma|  - C  \Big) - C_a -1 -\delta   \Big\} |\gamma_0 \setminus \gamma| \log p   \Big]  \\
	&\le& 
	\sum_{k =0}^{R_2}  \sum_{\nu=0}^{ (|\gamma_0|-1)\wedge k }  \binom{|\gamma_0|}{\nu}  \binom{p- |\gamma_0|}{k - \nu}   \exp \Big[  - \big( C_a + 1+ \delta + \frac{\log \tilde{C} }{\log p} - C'  \big) ( k -\nu ) \log p\\
	&& \quad\quad - \Big\{ \frac{(1-\epsilon)\lambda}{2}\Big( \frac{C_{\beta_0} }{2} (|\gamma_0| - \nu)  - C  \Big) - C_a -1 -\delta   \Big\} (|\gamma_0| - \nu) \log p   \Big] \\
	&\le&  \sum_{k =0}^{R_2}  \sum_{\nu=0}^{ (|\gamma_0|-1)\wedge k }  |\gamma_0|^{|\gamma_0| - \nu}   \exp \Big[  - \big( C_a + 1+ \delta + \frac{\log \tilde{C} }{\log p} - C'   -1 \big) ( k -\nu ) \log p\\
	&& \quad\quad - \Big\{ \frac{(1-\epsilon)\lambda}{2}\Big( \frac{C_{\beta_0} }{2} (|\gamma_0| - \nu)  - C  \Big) - C_a -1 -\delta   \Big\} (|\gamma_0| - \nu) \log p   \Big] \\
	&=&  o(1)
	\eea
	for some positive constants $C_a, C_{\beta_0}$ and $\delta$, 
	with probability at least $1-  2 \exp(-cn) - p^{-2}$.

	On the other hand, we have 
	\bea
	\sum_{G: G \neq G_0} \frac{\pi(\gamma_0 ,G \mid \hat{\delta},  \bfY_n, \bfX_n) }{\pi(\gamma_0 ,G_0  \mid \hat{\delta}, \bfY_n, \bfX_n)}
	&\le& 
	\sum_{G: G \neq G_0} \frac{ \pi( G \mid \hat{\delta}, \bfX_n) }{ \pi( G_0 \mid \hat{\delta}, \bfX_n) } \exp \big( b |\gamma_0|^2 \big) \\
	&=& \frac{1 -  \pi( G_0 \mid \hat{\delta}, \bfX_n)}{  \pi( G_0 \mid \hat{\delta}, \bfX_n)}  \exp \big( b |\gamma_0|^2 \big) 
	\,\,\overset{P}{\lra}\,\, 0
	\eea
	by Theorem 1 in \cite{jalali2020bconcord}, which completes the proof.	
\end{proof}

\begin{lemma}\label{lem:hessian}
	Under conditions (A1)--(A3) and $R_2 = (n / \log p)^{\frac{1-d}{2} }$ with $1/2 < d < 1$, 
	for $\epsilon_n = C R_2 \sqrt{ \log p /n } $, 
	we have
	\bea
	(1-\epsilon_n)  H_n\big( \beta_{0,\gamma} \big) \,\,\le\,\,  H_n\big( \beta_{\gamma} \big) \,\,\le\,\, (1+\epsilon_n) H_n\big( \beta_{0,\gamma} \big)
	\eea
	for any $\gamma \in M_1^* = \{ \gamma : \gamma \supset \gamma_0 , \, |\gamma| \le R_2  + |\gamma_0| \}$ and any $\beta_\gamma$ such that $\|\beta_\gamma - \beta_{0,\gamma} \|_2 \le  \big(  C' |\gamma|  \log p /n \big)^{\frac{1}{2} }$, with probability at least $1 - 2 \exp ( - cn)$ for some positive constants $c, C$ and $C'$.
\end{lemma}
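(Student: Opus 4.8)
The plan is to reduce the matrix sandwich to a uniform scalar comparison of the probit observed-information weights and then transport it back to $H_n$ by conjugation with $\bfX_\gamma$. Recall from the display above that $H_n(\beta_\gamma) = \bfX_\gamma^T \Psi_\gamma \bfX_\gamma$ with $\Psi_\gamma = \mathrm{diag}\big(\psi_1(\beta_\gamma), \ldots, \psi_n(\beta_\gamma)\big)$, and that each $\psi_i(\beta_\gamma)$ depends on $\beta_\gamma$ only through the scalar linear predictor $\eta_{i,\gamma} = X_{i,\gamma}^T\beta_\gamma$. So it suffices to show that, uniformly over $\gamma \in M_1^*$, over $\beta_\gamma$ in the stated ball, and over $1 \le i \le n$, one has $(1-\epsilon_n)\,\psi_i(\beta_{0,\gamma}) \le \psi_i(\beta_\gamma) \le (1+\epsilon_n)\,\psi_i(\beta_{0,\gamma})$; the claim then follows since $A \mapsto \bfX_\gamma^T A \bfX_\gamma$ preserves the Loewner order on diagonal (hence symmetric) matrices.

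The first step is to control the linear predictors. On the full-probability event $\{\|X_i\|_{\max} \le M,\ 1\le i \le n\}$ from condition (A2)(iii), and using that $\beta_{0,\gamma}$ is supported on $\gamma_0$ for every $\gamma \in M_1^*$, we get $|\eta_{0,i,\gamma}| = |X_{i,\gamma_0}^T\beta_{0,\gamma_0}| \le \|X_i\|_{\max}\|\beta_0\|_1 \le C_0$ for some constant $C_0$ by condition (A3). For $\beta_\gamma$ with $\|\beta_\gamma - \beta_{0,\gamma}\|_2 \le (C'|\gamma|\log p/n)^{1/2}$, Cauchy--Schwarz together with $\|X_{i,\gamma}\|_2 \le M|\gamma|^{1/2}$ gives $|\eta_{i,\gamma} - \eta_{0,i,\gamma}| \le M\sqrt{C'}\,|\gamma|\,(\log p/n)^{1/2}$, which is $O\big(R_2(\log p/n)^{1/2}\big) = O(\epsilon_n)$ since $|\gamma| \le R_2 + |\gamma_0|$ with $|\gamma_0| = O(1)$; under $R_2 = (n/\log p)^{(1-d)/2}$ with $d > 1/2$ this bound tends to $0$, so $|\eta_{i,\gamma}| \le C_0 + 1$ for $n$ large.

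The second step exploits the analytic structure of the probit weight. Writing $\psi_i(\beta_\gamma) = Y_i\,h(\eta_{i,\gamma}) + (1-Y_i)\,h(-\eta_{i,\gamma})$ with $h(\eta) = \eta\phi(\eta)/\Phi(\eta) + \{\phi(\eta)/\Phi(\eta)\}^2$ (note $\phi(-\eta)=\phi(\eta)$, $\Phi(-\eta)=1-\Phi(\eta)$, which identifies the $(1-Y_i)$ term with $h(-\eta_{i,\gamma})$), I would invoke the standard facts that $h > 0$ on all of $\bbR$ --- this is precisely the strict concavity of the probit log-likelihood --- and $h \in C^1$; hence on the compact interval $[-C_0-1, C_0+1]$ there are constants $0 < c_* \le L < \infty$ with $h \ge c_*$ and $h$ Lipschitz with constant $L$ there, and likewise for $\eta \mapsto h(-\eta)$. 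Combined with the first step, $|\psi_i(\beta_\gamma) - \psi_i(\beta_{0,\gamma})| \le L|\eta_{i,\gamma} - \eta_{0,i,\gamma}| = O(\epsilon_n)$ while $\psi_i(\beta_{0,\gamma}) \ge c_*$, so $|\psi_i(\beta_\gamma) - \psi_i(\beta_{0,\gamma})| \le \big(O(\epsilon_n)/c_*\big)\,\psi_i(\beta_{0,\gamma})$; absorbing constants into the constant $C$ in $\epsilon_n = CR_2(\log p/n)^{1/2}$ yields the desired relative bound, and feeding it into $H_n(\beta_\gamma) = \bfX_\gamma^T\Psi_\gamma\bfX_\gamma$ finishes the argument. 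Since every estimate holds deterministically on the event from (A2)(iii), which has probability $1$, no union bound over the candidate models is needed, and the stated probability $1-2\exp(-cn)$ holds a fortiori.

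The main obstacle is the bookkeeping in the first step: one must keep the perturbation of $\eta_{i,\gamma}$ at the $\epsilon_n$ scale uniformly over all $\gamma$ of size up to $R_2 + |\gamma_0|$, which is exactly where the model-size cap $R_2$, the condition $|\gamma_0| = O(1)$, and the boundedness $\|X_i\|_{\max} \le M$ enter. The other point requiring care is the analytic input on $h$ --- especially its strict positivity everywhere --- since this is what upgrades an additive control of $|\psi_i(\beta_\gamma) - \psi_i(\beta_{0,\gamma})|$ into the multiplicative (relative) sandwich the lemma asks for.
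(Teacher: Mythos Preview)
Your proof is correct and follows essentially the same approach as the paper: reduce the matrix sandwich to the scalar comparison of the weights $\psi_i$, control the linear-predictor perturbation via $\|X_i\|_{\max}\le M$ and the model-size cap, and use smoothness of $\psi_i$ on the resulting compact interval of $\eta$-values (the paper takes the Lipschitz continuity of $\log\psi_i$ directly, which is equivalent to your combination of a Lipschitz bound on $\psi_i$ with the positive lower bound $\psi_i\ge c_*$). Your observation that the argument is in fact deterministic on the probability-one event from condition (A2)(iii) is correct and sharper than the paper's phrasing; the eigenvalue bound on $n^{-1}\bfX_\gamma^T\bfX_\gamma$ that the paper records at the start of its proof is not actually used in proving this lemma.
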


\begin{proof}[Proof of Lemma \ref{lem:hessian}]
	
	Note that by Theorem 5.39 and Remark 5.40 in \cite{eldar2012compressed}, there exist positive constants $C_1^*$ and $C_2^*$ such that 
	\bean\label{eigenvals}
	C_1^* \,\,\le\,\, \min_{\gamma: |\gamma| \le R_2 + |\gamma_0|} \lambda_{\min} \Big( n^{-1} \bfX_{\gamma}^T \bfX_{\gamma} \Big) 
	\,\,\le\,\,   \max_{\gamma: |\gamma| \le R_2 + |\gamma_0|} \lambda_{\max} \Big( n^{-1} \bfX_{\gamma}^T \bfX_{\gamma} \Big) 
	\,\,\le\,\, C_2^* 
	\eean
	with probability at least $1- 2 \exp(-cn)$ for some constant $c>0$.
	Also note that $H_n(\beta_\gamma)   = \bfX_\gamma^T  \Psi_\gamma  \bfX_\gamma$, where $\Psi_\gamma  = diag( \psi_i(\beta_\gamma) ) \in \bbR^{n\times n}$ and 
	\bean\label{psi_element}
	\psi_i(\beta_\gamma) 
	&=&  Y_i \Big\{  \frac{\eta_{i,\gamma} \phi(\eta_{i,\gamma})}{\Phi(\eta_{i,\gamma}) } +    \frac{\phi(\eta_{i,\gamma})^2}{\Phi(\eta_{i,\gamma})^2 }    \Big\}    + (1-Y_i) \Big\{  \frac{-\eta_{i,\gamma} \phi(\eta_{i,\gamma})}{1-\Phi(\eta_{i,\gamma}) } +    \frac{\phi(\eta_{i,\gamma})^2}{ (1- \Phi(\eta_{i,\gamma}))^2 }  \Big\}  .
	\eean
	Thus, it suffices to show that 
	\bea
	(1-\epsilon_n) \psi_i(\beta_{0,\gamma}) 
	\,\,\le\,\, \psi_i(\beta_\gamma) 
	\,\,\le\,\,
	(1+\epsilon_n) \psi_i(\beta_{0,\gamma}) , \quad \forall i =1,\ldots, n.
	\eea
	
	Due to conditions (A1)--(A3), uniformly for $\gamma \in M_1^*$,
	\bea
	| \eta_{i,\gamma} - \eta_{i,0,\gamma} | 
	&=& | X_{i, \gamma}^T \beta_\gamma - X_{i,\gamma}^T \beta_{0,\gamma} |  \,\,\le\,\, \|X_{i,\gamma} \|_2  \| \beta_\gamma - \beta_{0,\gamma}  \|_2 \\
	&\le& \sqrt{|\gamma|} M \,  \big(  C |\gamma|  \log p /n \big)^{\frac{1}{2} }  \,\,\lesssim\,\,  R_2 \big(  \log p /n \big)^{\frac{1}{2} }  \,\,=\,\, o (1) 
	\eea
	and
	\bea
	| \eta_{i,\gamma} | 
	&\le& | \eta_{i,\gamma} - \eta_{i,0,\gamma} |  + | \eta_{i,0,\gamma} |  \\
	&\le& o (1)  + \|X_i\|_{\max} \|\beta_0\|_1 \,\,\le\,\, C
	\eea
	for some constant $C>0$, with probability at least $1- 2 \exp(-cn)$.
	Since the support of $\eta_{i,\gamma}$, say $\calS$, is compact and $h( \eta_{i,\gamma}  ) \equiv \log \psi_i(\beta_\gamma)$ is continuously differentiable on $\calS$, $h( \eta_{i,\gamma}  ) $ is Lipschitz continuous with probability at least $1- 2 \exp(-cn)$.
	Therefore, for any $\gamma \in M_1^*$, 
	\bea
	\frac{\psi_i(\beta_\gamma)}{\psi_i(\beta_{0,\gamma})} &=& \exp \big\{ h( \eta_{i,\gamma}  )  - h( \eta_{i,0,\gamma}  )    \big\} \,\,\le\,\,  \exp \big\{ K |\eta_{i,\gamma}    -  \eta_{i,0,\gamma} | \big\}  \\
	&\le& \exp \Big\{  K \, C \, R_2 \sqrt{ \frac{\log p}{n} }   \Big\} 
	\,\,\le\,\, 1 + C'  R_2 \sqrt{ \frac{\log p}{n} } 
	\eea
	and
	\bea
	\frac{\phi_i(\beta_\gamma)}{\phi_i(\beta_{0,\gamma})} &\ge& 1 - C''  R_2 \sqrt{ \frac{\log p}{n} } 
	\eea
	for some positive constants $K$, $C$, $C'$ and $C''$,  with probability at least $1- 2 \exp(-cn)$.
	By taking $\epsilon_n = C R_2 \sqrt{ \log p /n } $, it completes the proof.
\end{proof}

\begin{lemma}\label{lem:projection}
	Let $\tilde{U} = \sg^{-1/2} (Y- \mu)$ and $P_\gamma$ be the projection matrix onto the column space of $D_{0,\gamma} \bfX_{\gamma}$, where $|\gamma| \le R_2$, and $R_2 = (n / \log p)^{\frac{1-d}{2} }$ with $1/2 < d < 1$.
	Then, for some constant $\delta^*>0$, 
	\bea
	\bbP_0 \Big[ \tilde{U}^T P_\gamma \tilde{U}  > (1+\delta^*) \big\{ {\rm tr}(P_\gamma) + 2 \sqrt{{\rm tr}(P_\gamma \, ) t }  + 2 t   \big\}    \Big]   &\le& e^{-t} , \quad \forall t > 0.
	\eea
\end{lemma}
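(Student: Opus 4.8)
The plan is to argue conditionally on the design matrix $\bfX_n$ and then integrate the conditioning away, since the resulting bound $e^{-t}$ will be uniform over realizations of $\bfX_n$. Conditionally on $\bfX_n$, two facts turn the statement into a routine Hanson--Wright-type concentration bound for a quadratic form. First, with $\mu_i = \Phi(X_i^T\beta_0)$ and $\sigma_i^2 = \mu_i(1-\mu_i)$, the entries $\tilde{U}_i = (Y_i-\mu_i)/\sigma_i$ are (conditionally) independent with $\bbE_0(\tilde{U}_i\mid\bfX_n)=0$ and $\bbE_0(\tilde{U}_i^2\mid\bfX_n)=1$. Second, $P_\gamma$ is a deterministic symmetric projection, so $P_\gamma^2=P_\gamma$, $\|P_\gamma\|_{\mathrm{op}}\le 1$, $\|P_\gamma\|_F^2=\mathrm{tr}(P_\gamma)=:d$, and $\bbE_0(\tilde{U}^T P_\gamma\tilde{U}\mid\bfX_n)=\mathrm{tr}\{P_\gamma\,\bbE_0(\tilde{U}\tilde{U}^T\mid\bfX_n)\}=\mathrm{tr}(P_\gamma)=d$. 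We may assume $d\ge 1$, the case $d=0$ being trivial.

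First I would record that $\tilde{U}$ has, conditionally, \emph{uniformly bounded} entries. By conditions (A2)(iii) and (A3), $|X_i^T\beta_0|\le\|X_i\|_{\max}\|\beta_0\|_1=O(1)$ almost surely, so $\mu_i$ is bounded away from $0$ and $1$ and $\sigma_i^2\ge\sigma_{\min}^2$ for a constant $\sigma_{\min}>0$ depending only on $M$ and $\limsup_n\|\beta_0\|_1$. Since $|Y_i-\mu_i|\le 1$, this gives $|\tilde{U}_i|\le\sigma_{\min}^{-1}$ almost surely, hence each $\tilde{U}_i$ is sub-gaussian with $\|\tilde{U}_i\|_{\psi_2}\le K$ for a universal multiple $K$ of $\sigma_{\min}^{-1}$. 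Applying the upper-tail form of the Hanson--Wright inequality to the sub-gaussian vector $\tilde{U}$ and the matrix $P_\gamma$, conditionally on $\bfX_n$, yields a universal constant $c_0>0$ with
\bea
\bbP_0\big(\tilde{U}^T P_\gamma \tilde{U} - d > s \mid \bfX_n\big) &\le& \exp\Big(-c_0\min\Big\{\frac{s^2}{K^4 d},\ \frac{s}{K^2}\Big\}\Big),\qquad s>0.
\eea

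Then I would match this to the target Laurent--Massart-type form. Put $s=s(t)=(1+\delta^*)(d+2\sqrt{dt}+2t)-d$, so that $s\ge\delta^*d+2(1+\delta^*)(\sqrt{dt}+t)$. If $t\le d$ then $s\ge\delta^*d\ge\delta^*t$ and $s\ge\delta^*\sqrt{dt}$, so $s/K^2\ge\delta^*t/K^2$ and $s^2/(K^4d)\ge(\delta^*)^2t/K^4$; if $t>d$ then $s\ge 2(1+\delta^*)t$, and since $d<t$ we again get $s/K^2$ and $s^2/(K^4d)\ge s^2/(K^4t)$ bounded below by a $\delta^*$-dependent multiple of $t$. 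Choosing $\delta^*$ to be a sufficiently large constant, depending only on $c_0$ and $K$, forces $c_0\min\{\,\cdot\,\}\ge t$ in both regimes, whence $\bbP_0\big(\tilde{U}^T P_\gamma \tilde{U}>(1+\delta^*)(d+2\sqrt{dt}+2t)\mid\bfX_n\big)\le e^{-t}$ for every $t>0$ and every realization of $\bfX_n$; taking expectation over $\bfX_n$ completes the proof.

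The one genuinely delicate point --- and the step I would treat most carefully --- is the \emph{non-random} lower bound $\sigma_i^2\ge\sigma_{\min}^2$, which is what guarantees that the sub-gaussian norm of $\tilde{U}$ does not deteriorate with $n$; everything else is a direct invocation of Hanson--Wright plus elementary case analysis. I would also note that the inflation factor $1+\delta^*$, in place of the sharp Laurent--Massart constant $1$, is simply the price of the crude sub-gaussian moment bound and is immaterial here: the lemma only asks for \emph{some} constant $\delta^*$, and in its application in the proof of Theorem~\ref{thm:gamma_cons} one takes $t=2|\gamma|\log p$, for which the factor $1+\delta^*$ is absorbed into the resulting $O(|\gamma|\log p)$ estimate.
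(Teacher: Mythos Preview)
Your proof is correct and follows the same conceptual outline as the paper's: condition on $\bfX_n$, establish conditional sub-gaussianity of $\tilde U$, apply a quadratic-form concentration inequality, and then integrate out $\bfX_n$ since the resulting bound is deterministic. The difference lies in the concentration tool. The paper verifies that $\bbE_0\{\exp(u^T\tilde U)\mid\bfX_n\}\le\exp\{(1+\delta^*)\|u\|_2^2/2\}$ for $u$ in the column space of $D_{0,\gamma}\bfX_\gamma$ (citing condition~2(c) of \cite{narisetty2019}) and then invokes Theorem~A.1 of \cite{narisetty2019}, which directly yields the Laurent--Massart form $\mathrm{tr}(P_\gamma)+2\sqrt{\mathrm{tr}(P_\gamma)\,t}+2t$. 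You instead bound $|\tilde U_i|\le\sigma_{\min}^{-1}$ via (A2)(iii) and (A3), apply Hanson--Wright, and then match the two-regime bound $\min\{s^2/(K^4d),\,s/K^2\}$ to the Laurent--Massart shape by a case split $t\le d$ versus $t>d$ and a sufficiently large choice of $\delta^*$. The paper's route is shorter because the cited theorem is tailor-made for exactly this conclusion; your route is more self-contained since Hanson--Wright is a standard textbook tool and you do not need to restrict to $u$ in the column space of $D_{0,\gamma}\bfX_\gamma$. Your identification of the uniform lower bound $\sigma_i^2\ge\sigma_{\min}^2$ as the one genuinely delicate point is apt, and it is precisely the role that the paper's appeal to \cite{narisetty2019} is implicitly playing.
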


\begin{proof}[Proof of Lemma \ref{lem:projection}]
	Because the distribution of $Y_i$ given $X_i$ is a Bernoulli distribution, there exists a  constant $\delta^*>0$ and $N(\delta^*)$ such that, given $\bfX_n$,
	\bea
	\bbE_0 \big\{  \exp( u^T \tilde{U})  \mid \bfX_n \big\} &\le& \exp \Big\{  \frac{1+\delta^*}{2} \|u\|_2^2  \Big\}
	\eea
	for any $n\ge N(\delta^*)$ and $u\in \bbR^n$ in the space spanned by the columns of $D_{0,\gamma} \bfX_{\gamma}$ (See condition 2(c) and related explanations in \cite{narisetty2019}).
	By Theorem A.1 in \cite{narisetty2019}, 
	\bea
	\bbP_0 \Big[ \tilde{U}^T P_\gamma \tilde{U}  > (1+\delta^*) \big\{ {\rm tr}(P_\gamma) + 2 \sqrt{{\rm tr}(P_\gamma \, ) t }  + 2 t   \big\}  \mid \bfX_n  \Big]   &\le& e^{-t} , \quad \forall t > 0,
	\eea
	which implies the desired result because the right-hand-side does not depend on $\bfX_n$.
\end{proof}

\begin{lemma}\label{lem:betahat}
	Under conditions (A1)--(A3) and $R_2 = (n / \log p)^{\frac{1-d}{2} }$ with $1/2 < d < 1$, we have 
	\bea
	\sup_{\gamma: \gamma \supset \gamma_0 , |\gamma|=m  } \| \hat{\beta}_{\gamma}  - \beta_{0,\gamma} \|_2 
	&=& O \bigg( \sqrt{\frac{m \log p}{n} } \, \bigg) 
	\eea
	uniformly for all $m \le R_2  + |\gamma_0|$ with probability at least $1- 2 \exp(-cn)$ for some constant $c>0$. 
\end{lemma}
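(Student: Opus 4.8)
The plan is to rely on the strict concavity of the probit log-likelihood $L_n(\cdot)$ on each fixed model $\gamma$ (which holds, on the event \eqref{eigenvals}, because $H_n(\beta_\gamma)=\bfX_\gamma^T\Psi_\gamma\bfX_\gamma\succ 0$ whenever $\bfX_\gamma$ has full column rank, i.e.\ $|\gamma|\le R_2+|\gamma_0|<n$). By concavity it is enough to produce a single deterministic radius $C w_n$, with $w_n=\sqrt{m\log p/n}$ and $C$ a large constant to be fixed, such that on a good event
\bea
L_n(\beta_{0,\gamma}+v) \,<\, L_n(\beta_{0,\gamma}) \qquad \text{for all } v\in\bbR^{|\gamma|} \text{ with } \|v\|_2 = C w_n ,
\eea
simultaneously over all $\gamma\supset\gamma_0$ with $|\gamma|=m\le R_2+|\gamma_0|$: indeed, if this drop held on the whole sphere of radius $C w_n$ but the maximizer $\hat\beta_\gamma$ lay outside the corresponding ball, the segment from $\beta_{0,\gamma}$ to $\hat\beta_\gamma$ would cross that sphere at a point where, by concavity, $L_n$ is at least $L_n(\beta_{0,\gamma})$, a contradiction. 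So the whole task reduces to bounding $L_n(\beta_{0,\gamma}+v)-L_n(\beta_{0,\gamma})$ on that sphere, uniformly in $\gamma$ and $m$.

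Fixing such a $v$, I would second-order Taylor expand
\bea
L_n(\beta_{0,\gamma}+v)-L_n(\beta_{0,\gamma}) \,=\, v^T s_n(\beta_{0,\gamma}) - \tfrac12\, v^T H_n(\beta_{0,\gamma}+\theta v)\, v
\eea
for some $\theta\in(0,1)$, and control the two terms separately. For the curvature term, since $\|v\|_2=Cw_n=o(1)$ and $\|X_i\|_{\max}\le M$ by condition (A2)(iii), the linear predictors $X_{i,\gamma}^T(\beta_{0,\gamma}+\theta v)$ stay in a fixed compact set (the true predictors $\eta_{i,0}$ being bounded because $|\eta_{i,0}|\le\|X_i\|_{\max}\|\beta_0\|_1=O(1)$ by (A2)(iii) and (A3)); on that compact set the probit weights $\psi_i(\cdot)$ in $H_n=\bfX_\gamma^T\Psi_\gamma\bfX_\gamma$ are bounded below by a constant $c_\psi>0$. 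Together with $\lambda_{\min}(n^{-1}\bfX_\gamma^T\bfX_\gamma)\ge C_1^*$ from \eqref{eigenvals}, this gives $v^T H_n(\beta_{0,\gamma}+\theta v)\,v\ge c_\psi C_1^* n\|v\|_2^2 = c_\psi C_1^* C^2 m\log p$ uniformly over all models considered (one may alternatively invoke Lemma \ref{lem:hessian}).

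The linear term carries the real content, and here the restriction $\gamma\supset\gamma_0$ is essential: it forces $X_{i,\gamma}^T\beta_{0,\gamma}=X_i^T\beta_0=\eta_{i,0}$, so $s_n(\beta_{0,\gamma})=\bfX_\gamma^T D(\beta_{0,\gamma})\,\tilde U$ with $\tilde U=\sg^{-1/2}(Y-\mu)$ mean-zero given $\bfX_n$ and, by the boundedness of $\eta_{i,0}$, coordinatewise bounded (hence $O(1)$-sub-gaussian) with unit conditional variances. I would bound $|v^T s_n(\beta_{0,\gamma})|\le\|v\|_2\,\|s_n(\beta_{0,\gamma})\|_2$ and estimate $\|s_n(\beta_{0,\gamma})\|_2=\sup_{\|u\|_2=1} u^T\bfX_\gamma^T D(\beta_{0,\gamma})\tilde U$ through a $\tfrac12$-net of $S^{m-1}$ combined with Hoeffding's inequality for each linear functional $u^T\bfX_\gamma^T D(\beta_{0,\gamma})\tilde U$ (a sum of independent bounded mean-zero terms whose squared ranges sum to $\lesssim\|\bfX_\gamma u\|_2^2\le C_2^* n$), and then a union bound over the at most $\binom{p-|\gamma_0|}{m-|\gamma_0|}\le p^m$ admissible supports and over $m\le R_2+|\gamma_0|$. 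This yields $\sup_{\gamma\supset\gamma_0,\,|\gamma|\le R_2+|\gamma_0|}\|s_n(\beta_{0,\gamma})\|_2\le C_s\sqrt{nm\log p}$, hence $|v^T s_n(\beta_{0,\gamma})|\le C C_s\, m\log p$, on an event of probability at least $1-2\exp(-cn)$ (the polynomial-in-$p$ union-bound contribution being of lower order).

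Combining the two estimates, on the good event
\bea
L_n(\beta_{0,\gamma}+v)-L_n(\beta_{0,\gamma}) \,\le\, C C_s\, m\log p - \tfrac12 c_\psi C_1^* C^2 m\log p \,=\, C\big( C_s - \tfrac12 c_\psi C_1^* C\big) m\log p ,
\eea
which is strictly negative once $C>2C_s/(c_\psi C_1^*)$, uniformly over all $v$ on the sphere and all $\gamma,m$; the concavity argument of the first paragraph then gives $\|\hat\beta_\gamma-\beta_{0,\gamma}\|_2< C w_n$ for every such $\gamma$, i.e.\ the claimed uniform rate. The main obstacle is precisely the linear-term step: obtaining the $\sqrt{m\log p/n}$ rate \emph{uniformly over the exponentially many overfitted models} is what forces the net-plus-union-bound estimate, and it is that union bound over supports which produces the $\log p$ factor in the rate; dropping the hypothesis $\gamma\supset\gamma_0$ would destroy the mean-zero property of $s_n(\beta_{0,\gamma})$ and break the argument.
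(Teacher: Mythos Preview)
Your proposal is correct and follows essentially the same route as the paper: reduce via concavity to showing $L_n$ drops on a sphere of radius $O(\sqrt{m\log p/n})$ around $\beta_{0,\gamma}$, Taylor-expand, lower-bound the curvature through the uniform positivity of the probit weights $\psi_i$ on the relevant compact set together with \eqref{eigenvals}, and upper-bound the score uniformly over overfitted supports. The only noteworthy difference is in that last step: the paper bounds $\|s_n(\beta_{0,\gamma})\|_2^2 = \|\bfX_\gamma^T(\tilde Y-\tilde\mu)\|_2^2$ directly by a Hanson--Wright type deviation inequality (Theorem~A.1 of \cite{narisetty2019}), obtaining a per-model bound of order $p^{-2m}$ before the union over $\binom{p}{m}$ supports, whereas you reach the same conclusion through the more elementary $\tfrac12$-net plus Hoeffding argument; both are standard and yield the same rate and probability control.
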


\begin{proof}[Proof of Lemma \ref{lem:betahat}]
	Fix $\gamma$ such that $\gamma \supset \gamma_0$ and $|\gamma|=m$.
	Note that if $|\gamma| = |\gamma_0|=0$, then the above argument trivially holds.
	Thus, we focus only on the case $|\gamma_0| \ge 1$.
	Let $\tilde{Y}=( \tilde{Y}_i ) \in \bbR^n$ and $\tilde{\mu}= (\tilde{\mu}_i) \in \bbR^n$, where $\eta_{i,0} = X_i^T \beta_0$, $\tilde{Y}_{i}  = \phi (\eta_{i,0}) Y_i / \{ \Phi(\eta_{i,0})  ( 1- \Phi(\eta_{i,0})  ) \}$ and $\tilde{\mu}_i =  \phi (\eta_{i,0})  \Phi(\eta_{i,0}) / \{ \Phi(\eta_{i,0})  ( 1- \Phi(\eta_{i,0})  ) \}$.
	Then, $s_n(\beta_{0,\gamma})  = \bfX_{\gamma}^T (\tilde{Y}- \tilde{\mu} )$.
	Note that for any $\tilde{\alpha}\in \bbR^n$, 
	\bea
	\bbE_0 \exp \big\{ \tilde{\alpha}^T ( \tilde{Y}- \tilde{\mu})  \big\} 
	&=&  \bbE_0 \Big( \bbE_0\Big[ \exp \big\{ \tilde{\alpha}^T ( \tilde{Y}- \tilde{\mu})  \big\}  \mid \bfX_n \Big] \Big)  \\
	&\equiv& \bbE_0 \Big( \bbE_0\Big[ \exp \big\{ {\alpha}^T ( {Y}- {\mu})  \big\}  \mid \bfX_n \Big] \Big)  \\
	&\le& \bbE_0 \Big\{   \exp \Big( \frac{1}{8} \sum_{i=1}^n \alpha_i^2 \Big)  \Big\}  \\
	&=& \bbE_0 \Big\{   \exp \Big( \frac{1}{8} \sum_{i=1}^n \tilde{\alpha}_i^2 \frac{\phi (\eta_{i,0})^2}{ \Phi(\eta_{i,0})^2  ( 1- \Phi(\eta_{i,0})  )^2 } \Big)  \Big\}  \\
	&\le& \bbE_0 \Big\{   \exp \Big( \frac{\phi (M_0)^2}{ 8 \Phi( M_0)^2  ( 1- \Phi( M_0 )  )^2 }  \sum_{i=1}^n \tilde{\alpha}_i^2 \Big)  \Big\}
	\,\,\equiv \,\, \bbE_0 \Big\{   \exp \Big( \frac{1}{2} \sigma^2\| \tilde{\alpha}\|_2^2  \Big)  \Big\}
	\eea
	where $\alpha = ( \alpha_i )\in\bbR^n$ and $\alpha_i = \tilde{\alpha}_i  \phi (\eta_{i,0})  / \{ \Phi(\eta_{i,0})  ( 1- \Phi(\eta_{i,0})  ) \}$.
	The last inequality holds because $|\eta_{i,0}| \le \| X_i\|_{\max} \|\beta_0\|_1 \le M_0$ for some constant $M_0$.
	Therefore, by Theorem A.1 in \cite{narisetty2019}, it implies 
	\bea
	p^{-2m}  &\ge& 
	\bbP_0 \Big\{  \| \bfX_{\gamma}^T  ( \tilde{Y}- \tilde{\mu})  \|_2^2  \ge \sigma^2 \big( {\rm tr}(\bfX_{\gamma} \bfX_{\gamma}^T) + 2 \sqrt{{\rm tr}(\bfX_{\gamma} \bfX_{\gamma}^T\bfX_{\gamma} \bfX_{\gamma}^T) \, 2 m \log p } + 4 \|\bfX_{\gamma} \bfX_{\gamma}^T\| m \log p  \big)  \mid \bfX_n \Big\}  \\
	&\ge& \bbP_0 \Big\{  \| \bfX_{\gamma}^T  ( \tilde{Y}- \tilde{\mu})  \|_2^2  \ge \sigma^2 \big( {\rm tr}(\bfX_{\gamma}^T \bfX_{\gamma} ) + 2 {\rm tr}\{(\bfX_{\gamma}^T \bfX_{\gamma} )^2\} \sqrt{  2 m \log p } + 4 \|\bfX_{\gamma}^T \bfX_{\gamma}\| m \log p  \big) \mid \bfX_n  \Big\}   \\
	&\ge& \bbP_0 \Big\{  \| \bfX_{\gamma}^T  ( \tilde{Y}- \tilde{\mu})  \|_2^2  \ge 
	\sigma^2 \big( C_2^* m n  + 2\sqrt{2} C_2^* m n \sqrt{  \log p } + 4 C_2^* m n \log p  \big)  \mid \bfX_n  \Big\}   \\
	&\ge& \bbP_0 \Big\{  \| \bfX_{\gamma}^T  ( \tilde{Y}- \tilde{\mu})  \|_2^2  \ge 
	5 C_2^* \sigma^2  m n \log p   \mid \bfX_n  \Big\} 
	\eea
	for all sufficiently large $p$, with probability at least $1- 2 \exp(-cn)$ for some constant $c>0$.
	Here, $C_2^*$ is defined at \eqref{eigenvals}.
	
	Now, let $\beta_\gamma  = \beta_{0,\gamma} + c_n u $, where $u \in \bbR^n$, $\|u\|_2= 1$, 
	$c_n = \sqrt{20 C_2^* \sigma^2  m \log p / \{ n \lambda^2 ( 1-\epsilon)^2  \} }$ for some small constant $\epsilon>0$ and $C_1^*$ is defined at \eqref{eigenvals}.
	Then,
	\bea
	&& \bbP_0 \Big\{  L_n(\beta_\gamma )  - L_n(\beta_{0,\gamma})   > 0 \,\, \text{ for some } u  \mid \bfX_n  \Big\}  \\
	&=& \bbP_0 \Big\{ c_n u^T s_n( \beta_{0,\gamma} )  > \frac{1}{2} c_n^2 u^T H_n ( \tilde{\beta}_\gamma ) u    \,\, \text{ for some } u   \mid \bfX_n\Big\}  \\
	&\le&  \bbP_0 \Big\{  u^T s_n( \beta_{0,\gamma} )  > \frac{1}{2} (1-\epsilon) c_n    \lambda_{\min} \big( H_n( \beta_{0,\gamma} ) \big)    \,\, \text{ for some } u   \mid \bfX_n\Big\}  \\
	&\le&  \bbP_0 \Big\{  u^T s_n( \beta_{0,\gamma} )  > \frac{1}{2} (1-\epsilon) \lambda c_n   n  \,\, \text{ for some } u   \mid \bfX_n \Big\}  \\
	&\le& \bbP_0 \Big\{ \| \bfX_{\gamma}^T (\tilde{Y}- \tilde{\mu} )  \|_2^2  >  \frac{(1-\epsilon)^2}{4}  \lambda^2 c_n^2   n^2     \mid \bfX_n \Big\}  \\
	&=& \bbP_0 \Big\{ \| \bfX_{\gamma}^T (\tilde{Y}- \tilde{\mu} )  \|_2^2  >  5 C_2^*  \sigma^2 m n \log p     \mid \bfX_n \Big\}  
	\,\,\le\,\, p^{-2m} 
	\eea
	with probability at least $1- 2 \exp(-cn)$ for some constant $c>0$,
	where the first equality holds due to the Taylor's expansion,
	the first inequality holds due to Lemma \ref{lem:hessian},
	the second inequality holds due to Lemma \ref{lem:hessian_lower}.
	Due to the concavity of $L_n(\cdot)$, it implies that 
	\bea
	\bbP_0 \Big(  \sup_{\gamma: \gamma \supset \gamma_0 , |\gamma|=m  } \| \hat{\beta}_{\gamma}  - \beta_{0,\gamma} \|_2   > c_n  \,\, \text{ for any } m \le R_2 \Big)  
	&\le& \sum_{m=|\gamma_0|}^{ R_2+ |\gamma_0|} \binom{p}{m} p^{-2m}  + 2e^{-cn}
	\,\,=\,\, o (1) ,
	\eea
	which gives the desired result.
\end{proof}

\begin{lemma}\label{lem:hessian_lower}
	Under conditions(A1)--(A3) and $R_2 = (n / \log p)^{\frac{1-d}{2} }$ with $1/2 < d < 1$, there exists a constant $\lambda>0$ such that 
	\bea
	\lambda &\le&  \min_{\gamma : |\gamma| \le R_2 + |\gamma_0| } \lambda_{\min} \Big( \frac{1}{n} H_n( \beta_{0,\gamma} ) \Big) 
	\eea
	with probability at least $1- 2 \exp(-cn)$ for some constant $c>0$.
\end{lemma}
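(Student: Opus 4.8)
The plan is to reduce the statement to the restricted smallest‑eigenvalue bound \eqref{eigenvals} already recorded, by bounding the diagonal information weights $\psi_i(\beta_{0,\gamma})$ from below by a positive constant that does not depend on $\gamma$ or $n$. Recall that $H_n(\beta_{0,\gamma}) = \bfX_\gamma^T \Psi_\gamma \bfX_\gamma$ with $\Psi_\gamma = \mathrm{diag}\big(\psi_i(\beta_{0,\gamma})\big)$, where $\psi_i$ is the expression in \eqref{psi_element} evaluated at $\eta_{i,0,\gamma} = X_{i,\gamma}^T \beta_{0,\gamma}$.

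The first step is a deterministic bound $|\eta_{i,0,\gamma}| \le M_0$, valid for all $1 \le i \le n$ and all $\gamma$ with $|\gamma| \le R_2 + |\gamma_0|$, with $M_0$ a fixed constant. Indeed $|\eta_{i,0,\gamma}| = \big| \sum_{j\in\gamma} x_{ij}\beta_{0,j} \big| \le \|X_i\|_{\max}\,\|\beta_0\|_1$, and this is at most some finite $M_0$ for all $n$ by condition (A2)(iii) (which holds with $\bbP_0$‑probability one) together with $\|\beta_0\|_1 = O(1)$ from condition (A3); crucially the bound is uniform in $\gamma$ and $n$. The second step, the only mildly delicate point, is to deduce $\psi_i(\beta_{0,\gamma}) \ge \psi_{\min}$ for a constant $\psi_{\min}>0$ depending only on $M_0$. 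Writing $\lambda(t) = \phi(t)/\Phi(t)$ for the inverse Mills ratio, the coefficient of $Y_i$ in \eqref{psi_element} equals $\lambda(\eta_{i,0,\gamma})\big(\eta_{i,0,\gamma}+\lambda(\eta_{i,0,\gamma})\big)$ and the coefficient of $1-Y_i$ equals $\lambda(-\eta_{i,0,\gamma})\big(\lambda(-\eta_{i,0,\gamma})-\eta_{i,0,\gamma}\big)$; since $\lambda$ is positive and strictly decreasing, one has $t+\lambda(t) = -\lambda'(t)/\lambda(t) > 0$ for every real $t$, so both coefficients are strictly positive and are continuous functions of $\eta_{i,0,\gamma}$. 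Restricting to the compact interval $\{|\eta| \le M_0\}$ and using $Y_i \in \{0,1\}$, we obtain $\psi_i(\beta_{0,\gamma}) \ge \psi_{\min} := \min_{|\eta|\le M_0}\min\{\lambda(\eta)(\eta+\lambda(\eta)),\,\lambda(-\eta)(\lambda(-\eta)-\eta)\} > 0$, regardless of the value of $Y_i$.

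Combining the two steps, $\Psi_\gamma \ge \psi_{\min} I_n$ in the positive semidefinite order, so
\[
H_n(\beta_{0,\gamma}) \,=\, \bfX_\gamma^T \Psi_\gamma \bfX_\gamma \,\ge\, \psi_{\min}\, \bfX_\gamma^T \bfX_\gamma ,
\]
and therefore $\lambda_{\min}\big(n^{-1} H_n(\beta_{0,\gamma})\big) \ge \psi_{\min}\,\lambda_{\min}\big(n^{-1}\bfX_\gamma^T\bfX_\gamma\big)$. On the event appearing in \eqref{eigenvals}, which has $\bbP_0$‑probability at least $1-2\exp(-cn)$, the latter quantity is bounded below by $\psi_{\min} C_1^*$ uniformly over all $\gamma$ with $|\gamma| \le R_2 + |\gamma_0|$. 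Taking $\lambda = \psi_{\min} C_1^*$ then gives the claim. No genuine obstacle remains beyond keeping track of absolute constants; the elementary monotonicity of the inverse Mills ratio is the only analytic input, and the condition on $R_2$ enters solely through \eqref{eigenvals}.
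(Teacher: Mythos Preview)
Your argument is correct. In fact it is cleaner than the paper's own proof: you exploit condition (A2)(iii) directly to obtain the deterministic bound $|\eta_{i,0,\gamma}|\le \|X_i\|_{\max}\|\beta_0\|_1\le M_0$ for \emph{every} $i$, which immediately gives $\Psi_\gamma\ge \psi_{\min} I_n$ and hence $n^{-1}H_n(\beta_{0,\gamma})\ge \psi_{\min}\, n^{-1}\bfX_\gamma^T\bfX_\gamma$; the restricted eigenvalue bound \eqref{eigenvals} then finishes the job with $\lambda=\psi_{\min}C_1^*$. The paper instead introduces a random index set $I=\{i:|X_i^T\beta_0|\le M\}$, uses Hoeffding's inequality to show $|I|\ge n(1-w)/2$ with high probability, restricts to the submatrix $\bfX_{I,\gamma}$, and bounds $\min_{i\in I}\psi_i(\beta_{0,\gamma})$ from below on that subset. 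This detour would be needed if one only had sub-gaussianity of $X_i$ (so that $|X_i^T\beta_0|$ is merely bounded in probability), but under the almost-sure boundedness hypothesis (A2)(iii) it is superfluous: your route is strictly more elementary and yields the same conclusion with a slightly better constant. Your handling of the positivity of both coefficients in $\psi_i$ via the inverse Mills ratio identity $\lambda'(t)=-\lambda(t)\big(t+\lambda(t)\big)$ is also tidier than the paper's explicit lower bound $d_M$.
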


\begin{proof}[Proof of Lemma \ref{lem:hessian_lower}]
	Let $[n] =\{1,\ldots, n\}$ and $I = \{ i \in [n]:  |X_i^T \beta_0 | \le M \}$ for some large constant $M>0$ satisfying, for some constant $0<w<1$,
	\bean\label{bound_xbeta}
	\max_{1\le i \le n}\bbP_0 \big( | X_i^T \beta_0 | \ge M \big) &\le& w \,\,<\,\, 1.
	\eean
	Note that a constant $M$ satisfying \eqref{bound_xbeta} always exists due to conditions (A2) and (A3).
	Then, by Hoeffding's inequality,  $\bbP_0 \big\{  |I|  \ge n(1-w)/2  \big\} \ge 1 - \exp \{ -n (1-w) \}$. 
	Furthermore, by \eqref{eigenvals}, for any $\gamma$ such that $|\gamma| \le R_2 + |\gamma_0|$,
	\bea
	\lambda_{\min} \Big( \frac{1}{n} H_n( \beta_{0,\gamma} ) \Big) 
	&=& \lambda_{\min} \Big( \frac{1}{n}  \bfX_\gamma^T   diag( \psi_i(\beta_{0,\gamma}) )  \bfX_\gamma \Big)  \\
	&\ge& \lambda_{\min} \Big( \frac{1}{n}  \bfX_{I,\gamma}^T   diag( \psi_i(\beta_{0,\gamma}) )  \bfX_{I,\gamma} \Big) \\
	&\ge& \frac{1}{2} C_1^*(1-w) \, \min_{i\in I} \psi_i(\beta_{0,\gamma} ) 
	\eea
	with probability at least $1- 2 \exp(-cn)$ for some constant $c>0$, 
	where the definition of $ \psi_i(\beta_{0,\gamma})$ is given at \eqref{psi_element}
	and $\bfX_{I,\gamma} \in \bbR^{|I| \times |\gamma| }$ is a submatrix of $\bfX_{\gamma}$ consisting of the columns corresponding to $I$.
	Note that for any $i\in I$, 
	\bea
	\psi_i(\beta_{0,\gamma} ) 
	&\ge& M \frac{ \phi(M) }{\Phi(M)}  +  \frac{ \phi(M)^2 }{\Phi(M)^2 } \,\,\equiv\,\, d_M .
	\eea
	Therefore, 
	\bea
	\min_{\gamma : |\gamma| \le R_2 + |\gamma_0| } \lambda_{\min} \Big( \frac{1}{n} H_n( \beta_{0,\gamma} ) \Big) 
	&\ge& \frac{1}{2} d_M  C_1^*(1-w) 
	\eea
	with probability at least $1- 2 \exp(-cn)$ for some constant $c>0$.
\end{proof}

\bibliographystyle{plainnat}
\bibliography{references}
\end{document}